\documentclass[10pt,aps,pra,twocolumn,floatfix,nofootinbib,superscriptaddress,longbibliography]{revtex4-2}
\usepackage[utf8]{inputenc}

\usepackage[dvipsnames]{xcolor}
\usepackage{bm,graphicx,mathrsfs,amsmath,amssymb,mathtools,makecell,bbm,amsthm,dsfont,color,times,txfonts,nicefrac,framed,enumitem,tikz,physics,wrapfig,amsfonts,lipsum,nicematrix,xargs, multirow}
\definecolor{equationcolor}{RGB}{222,94,100}
\definecolor{changescolor}{rgb}{0, 0, 0.7}
\newcommand{\iden}{\mathbbm{1}}
\usepackage[colorlinks=true,linkcolor=changescolor,citecolor=equationcolor]{hyperref}
\hypersetup{urlcolor=equationcolor}
\usetikzlibrary{arrows.meta}
\usetikzlibrary{positioning}
\usetikzlibrary{calc} 
\usepackage{comment}
\graphicspath{ {Figures} }
\usepackage{transparent} 
\usepackage{import}
\newtheorem{definition}{Definition}[section]
\newtheorem{lemma}[definition]{Lemma}

\newtheorem{corollary}[definition]{Corollary}
\newtheorem{proposition}[definition]{Proposition}

\newcommand{\qh}{\hat q}
\newcommand{\ph}{\hat p}
\newcommand{\rhoh}{\hat\rho}
\newcommand{\Pih}{\hat\Pi}
\newcommand{\Dh}{\hat D}
\newcommand{\Wop}[2]{\hat W(#1,#2)}
\newcommand{\Sop}[2]{\hat S_{#1,#2}}
\newcommand{\ZL}{\hat Z_L}
\newcommand{\XL}{\hat X_L}
\newcommand{\YL}{\hat Y_L}
\newcommand{\GKP}{\mathrm{GKP}}
\renewcommand{\d}{\textrm{d}}

\begin{document}

\title{The logical set of Gaussian states under the stabilizer subsystem decomposition}
\author{Santiago Zamora}
\affiliation{International Institute of Physics, Federal University of Rio Grande do Norte, 59078-970, Natal, Brazil}
\affiliation{Physics Department, Federal University of Rio Grande do Norte, Natal, 59072-970, Rio Grande do Norte, Brazil}
\affiliation{Department of Mathematics and Statistics, University of Ottawa, Ottawa, Ontario K1N 6N5, Canada}
\author{Kaustav Chatterjee}
\author{Ulrik Lund Andersen}
\affiliation{Center for Macroscopic Quantum States bigQ, Department of Physics,
Technical University of Denmark, Fysikvej 307, 2800 Kgs. Lyngby, Denmark}
\author{Thiago Lucena de Macedo Guedes}
\affiliation{International Institute of Physics, Federal University of Rio Grande do Norte, 59078-970, Natal, Brazil}
\author{Jonatan Bohr Brask}
\affiliation{Center for Macroscopic Quantum States bigQ, Department of Physics,
Technical University of Denmark, Fysikvej 307, 2800 Kgs. Lyngby, Denmark}
\author{A.~de~Oliveira~Junior}
\affiliation{Center for Macroscopic Quantum States bigQ, Department of Physics,
Technical University of Denmark, Fysikvej 307, 2800 Kgs. Lyngby, Denmark}
\author{Rafael Chaves }
\affiliation{International Institute of Physics, Federal University of Rio Grande do Norte, 59078-970, Natal, Brazil}

\begin{abstract}
Universal quantum computation requires non-Gaussianity in continuous-variable systems and non-stabilizerness in discrete-variable systems.
Yet, mapping a Gaussian state into the Gottesman--Kitaev--Preskill logical subspace can yield a logical qubit with non-stabilizer resources. To understand how the continuous-variable structure determines logical resources, we characterize the image of single-mode Gaussian states under the corresponding extraction channel, known as the stabilizer subsystem decomposition. Specifically, we derive series expressions for the logical Bloch vector and characterize the resulting reachable set, enabling an analytical treatment of the robustness of magic of the logical states. We then turn this geometric framework into a certification tool: by mapping a discrete-variable qubit witness to a continuous-variable squeezing witness, we obtain lower bounds on the squeezing of prepared states. We also show that ensembles of Gaussian states can violate both stabilizer and classical bounds. Violation of the stabilizer bound certifies the relational resource known as ``set-magic,'' while violation of the classical bound enables the certification of cryptographic randomness at moderate squeezing.
\end{abstract}

\maketitle
\textit{Introduction---}Continuous-variable (CV) and discrete-variable (DV) systems rely on different resources for computation. While non-Gaussianity is required for universal quantum computation in CV systems~\cite{MariEisert2012, Veitch2013,Takagi2018,Walschaers2021}, non-stabilizerness, or ``magic,'' plays an analogous role in DV systems~\cite{BravyiKitaev2005, Veitch2014}. The question is then how these two resources are related. The Gottesman-Kitaev-Preskill (GKP) code~\cite{GKP2001} provides a bridge between these two paradigms by encoding discrete logical information into a continuous-variable bosonic mode~\cite{Hahn2022,Hahn2025}. Less understood is the reverse direction. First, what logical qubit states can Gaussian bosonic states produce when mapped into the logical subspace? Second, what restrictions does Gaussianity impose on their logical non-stabilizerness? 

To address these questions, we consider ideal GKP error correction followed by extraction of the logical qubit, with the syndrome discarded. The resulting channel is described by the stabilizer subsystem decomposition (SSD)~\cite{Shaw_2024,Calcluth_2024}. Previous numerical studies showed that Gaussian inputs can produce logical non-stabilizer states under this map, remarkably even when the input is the vacuum~\cite{Calcluth_2024}. What remains unclear, however, is the structure of the full family of logical states accessible from Gaussian inputs and, consequently, the constraints that Gaussianity imposes on the resulting logical resource. Related work on non-Gaussian states further emphasizes the broader question of how properties of a bosonic state are inherited by the logical state obtained after mapping to the GKP subspace.~\cite{David2026}.

In this Letter, we derive an exact parametrization of the logical image of arbitrary single-mode Gaussian states under SSD. The resulting series are absolutely convergent for every finite positive-definite covariance matrix. We then determine the $XZ$ projection of this image analytically, prove that it is convex, and identify the Gaussian states that generate its boundary. As a consequence, we obtain the exact robustness of the mapped vacuum and a leading-harmonic stationary estimate within an undisplaced diagonal pure-state family, consistent with earlier numerical work~\cite{Calcluth_2024}. We also construct fixed-squeezing Gaussian families approaching all six vertices of the single-qubit stabilizer polytope. Their convex combinations approximate every point of the polytope with a one-sided error of order $e^{-r}$, where $r$ is the squeezing parameter. 

The exact component bounds also provide a tight squeezing criterion for a single mapped Gaussian preparation, without optimizing over displacements or squeezing angles. To demonstrate the operational utility of our framework, we explore a prepare-and-measure (PAM) scenario~\cite{brask2026quantum}. Using a qubit witness~\cite{Taoutioui2025}, we map the corresponding DV inequality to a CV squeezing witness. We also show that logical ensembles derived from Gaussian states can violate the stabilizer bound of this witness. Such violations certify ``set-magic''~\cite{salazar2022resourcetheoryabsolutenegativity,Wagner2025}, a relational resource with direct applications in enhancing communication protocols~\cite{Zamora2025}. Finally, violations of the classical PAM bound enable the certification of cryptographic randomness from Gaussian inputs, giving a nonzero certified conditional min-entropy already at moderate levels of squeezing.

\textit{Framework---}We consider a single bosonic mode with quadratures $\qh,\ph$ satisfying $[\qh,\ph]=i$. A single-mode Gaussian state is fully specified by its displacement $\boldsymbol d=(q_0,p_0)^T$, with $q_0=\langle\qh\rangle$ and $p_0=\langle\ph\rangle$, and its covariance matrix $V$~\cite{serafini2017}. In $(q,p)$ order, the covariance entries are $V_{qq}\!=\!\langle\qh^2\rangle-q_0^2$, $V_{pp}\!=\!\langle\ph^2\rangle-p_0^2$, and $V_{qp}\!=\!V_{pq}\!=\!\frac12\langle\qh\ph+\ph\qh\rangle-q_0p_0$, where $\langle\hat O\rangle=\Tr(\rhoh\hat O)$. The covariance matrix is real, symmetric, positive definite, and satisfies the uncertainty relation $\det V\geq\tfrac14$. For a general single-mode Gaussian state, $V$ can be parametrized as \mbox{$V=\nu R(\theta)S(2r)R(\theta)^T$}, where $\nu=\sqrt{\det V}\geq\tfrac12$ is the symplectic eigenvalue of $V$, $S(r) = \operatorname{diag}(e^{-r}, e^{r})$ is the squeezing matrix with $r\geq0$ and $R(\theta) = e^{-i\theta \sigma_y}$ is the rotation matrix with $\theta\in[0,\pi)$. Pure states correspond to $\nu=\tfrac12$~\cite{Weedbrook_2012}. 

To extract a logical qubit, we apply an ideal GKP syndrome measurement and correction and discard the syndrome. For the square-lattice GKP code, the resulting channel is associated with the stabilizer subsystem decomposition (SSD)~\cite{Shaw_2024,Calcluth_2024} and can be written in the usual ideal-code notation as
\begin{equation}
\Lambda(\hat\rho)=  \frac1{\sqrt\pi}\int_{-\sqrt\pi/2}^{\sqrt\pi/2} \d s_q\int_{-\sqrt\pi/2}^{\sqrt\pi/2} \d s_p\ \Pih\Dh(-\vec{s})\hat\rho\Dh(-\vec{s})^\dagger\Pih, \label{eq:SSDmap_mt}
\end{equation}
where $\vec s = (s_q,s_p)$, $\hat{D}(\vec{s}) = e^{i \hat{q} s_p} e^{-i \hat{p} s_q}$ translates $\qh$ by $s_q$ and $\ph$ by $s_p$, and $\hat{\Pi} =\ket{0_\GKP}\bra{0_\GKP}+\ket{1_\GKP}\bra{1_\GKP}$ is the projector onto the code space spanned by the ideal GKP states $|l_\GKP\rangle = \sum_{n\in\mathbb Z} \big|\qh=(2n+l)\sqrt\pi\big\rangle$, for $l=0,1$. Equation~\eqref{eq:SSDmap_mt} is understood through its two-dimensional logical matrix elements, as detailed in the Supplemental Material (SM)~\ref{sec:SSD}, which also establishes the equivalence of ordered and symmetric displacement conventions. The displacements $\Dh(\sqrt\pi,0)$ and $\Dh(0,\sqrt\pi)$ induce the logical Pauli operators $\XL$ and $\ZL$, respectively, with $\YL=i\XL\ZL$. The logical state is
\begin{equation}
\hat\rho_L=\Lambda(\hat\rho) =\frac12\left(\iden_L+X\hat X_L+Y \hat Y_L+Z  \hat Z_L\right),
\end{equation}
where $X=\operatorname{Tr}(\rhoh_L\XL)$ and analogously for $Y$ and $Z$. We denote its Bloch vector by $\vec R_\rho=(X,Y,Z)$.

Our goal is to characterize the logical states reachable from Gaussian inputs under fixed or bounded squeezing. We define
\begin{equation}
    G_{\leq r_0}= \left\{(X,Y,Z):\hat\rho\ {\rm is\ Gaussian\ with\ squeezing}\ r\leq r_0\right\},
\end{equation}
and write $G_{r_0}$ for the corresponding set at fixed squeezing $r=r_0$. With this notation $G_{\leq\infty}$ represents the image of all Gaussian states.

\textit{Gaussian states under SSD---}The central geometric question is which logical Bloch vectors are compatible with a Gaussian input. To express the variance and displacement dependence of the logical components, we introduce
\begin{equation}\label{eq:A_mt}
    A(t,\varphi)=\frac4\pi\sum_{m=0}^{\infty}\frac{(-1)^m}{2m+1}e^{-\frac{\pi}{2}(2m+1)^2t}\cos[(2m+1)\varphi],
\end{equation}
for $t>0$ and $\varphi\in\mathbb R$, and write $A(t)\equiv A(t,0)$ for its zero-displacement value. The exact logical Bloch components are
\begin{align}
    X & = A(V_{pp},\sqrt\pi p_0), \nonumber \\
    Y & =-\frac8{\pi^2}\sum_{m=0}^{\infty}\sum_{n=-\infty}^{\infty}\frac{\cos\!\big(\sqrt\pi[(2m+1)q_0+(2n+1)p_0]\big)}{(2m+1)(2n+1)}\varphi_{mn}, \nonumber \\
    Z & = A(V_{qq},\sqrt\pi q_0),\label{Eq:XYZ-component}
\end{align}
where $\varphi_{mn} = e^{-\frac{\pi}{2}[(2m+1)^2V_{qq}+(2n+1)^2V_{pp} +2(2m+1)(2n+1) V_{pq}]}$ is a Gaussian damping factor. These series converge absolutely for every finite positive-definite covariance matrix (their derivation is given in SM~\ref{sec:Bloch_vec_ssd} and~\ref{sec:Gaussian_Set}). The projection of this image onto the $XZ$ plane can be characterized exactly, as we now show.

\begin{figure}[t!]
    \centering
    \includegraphics{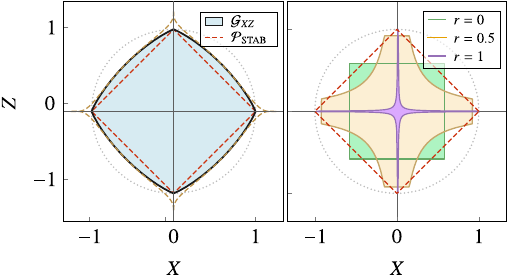}
    \caption{\textbf{Gaussian images projected onto the $XZ$ plane.} \textit{Left:} Projection $\mathcal G_{XZ}$ for all Gaussian inputs (light blue), with its exact boundary (black). The orange dashed curve is the leading-harmonic approximation in Eq.~\eqref{eq:approx_XZ_mt}. \textit{Right:} Exact projections for fixed squeezing $r=0,0.5,1$. In both panels, the gray dotted circle is the boundary of the Bloch disk and the red dashed diamond is $\operatorname{proj}_{XZ}\mathcal P_{\rm STAB}$.}
    \label{fig:XZ_set}
\end{figure}

We denote the projection onto the $XZ$ plane by $\mathcal G_{XZ}\equiv \operatorname{proj}_{XZ}(G_{\leq\infty})$. For a fixed covariance matrix $V$, varying the displacements $p_0$ and $q_0$ allows $X$ and $Z$ to range independently over $[-A(V_{pp}),A(V_{pp})]$ and $[-A(V_{qq}),A(V_{qq})]$, respectively. The reachable pairs $(X,Z)$ therefore form a rectangle. Taking the union of these rectangles over all positive variance pairs satisfying $V_{pp}V_{qq}\geq\tfrac14$ gives $\mathcal G_{XZ}$, and the resulting set is convex. For $X,Z>0$, its boundary is the strictly concave curve $(X,Z)=\bigl(A(V_{pp}),A(\tfrac{1}{4V_{pp}})\bigr)$, where $V_{pp}>0$. The curve is generated by pure, axis-aligned Gaussian inputs whose displacements lie on the lattice $2\sqrt\pi\mathbb Z$. In particular, centered Gaussian states attain this boundary.  The remaining quadrants follow by reflection, and the four axis endpoints are attained only as limits. The proof and boundary-state characterization are given in SM~\ref{sec:exact_XZ}.

For fixed squeezing $r$, mixed states do not enlarge the projection. At fixed $r$ and $\theta$, increasing $\nu$ increases both $V_{qq}$ and $V_{pp}$ and hence decreases the allowed ranges of $X$ and $Z$. Consequently, it is sufficient to consider pure states, for which $V_{qq}=\cosh(2r)-V_{pp}$ and $\tfrac12e^{-2r}\leq V_{pp}\leq\tfrac12e^{2r}$. Varying $\theta$ gives all variance pairs in this interval. The projection is then obtained from the same rectangle construction. In particular, for $r=0$ it is the square $[-A(\tfrac12),A(\tfrac12)]^2$.

Let $\mathcal P_{\rm STAB}=\operatorname{conv}\{(\pm1,0,0),(0,\pm1,0),(0,0,\pm1)\}$ be the single-qubit stabilizer polytope. Here $\operatorname{conv}\{S\}$ denotes the convex hull of the set $S$.  Figure~\ref{fig:XZ_set} shows the two $XZ$ projections discussed above. In the left panel, part of $\mathcal G_{XZ}$ lies outside the stabilizer diamond, showing that Gaussian inputs can produce non-stabilizer logical states; the leading-harmonic approximation shown in Eq.~\eqref{eq:approx_XZ_mt} follows the exact boundary closely except near the coordinate axes, where it leaves the Bloch disk. The right panel shows that the fixed-$r$ projections become concentrated near the coordinate axes as $r$ increases. The exact projection also gives an immediate consequence for logical magic. For a single qubit, the robustness of magic is $R^{(1)}(\rho_L)=\max\{1,|X|+|Y|+|Z|\}$~\cite{Calcluth_2024} and quantifies the overhead of quasiprobability-sampling protocols~\cite{howard2017application}. For the undisplaced vacuum, $V_{qq}=V_{pp}=\tfrac12$ and $V_{qp}=0$, so $X=Z=A(\tfrac12)$ and $Y=0$. Its exact robustness is therefore $R^{(1)}(\rho_L^{\rm vac})=2A(\tfrac12)\approx1.160314$, consistent with the numerical value $1.160$ in Ref.~\cite{Calcluth_2024}. More generally, we prove in the SM~\ref{sec:proof_thm1} that $|X|+|Z|\leq2A(\tfrac12)$ for every Gaussian input, with equality only for the vacuum covariance and its lattice-displaced copies. Thus, squeezing cannot increase $|X|+|Z|$ beyond the vacuum value. Any larger robustness must come from a nonzero $Y$ component. Since the exact $XZ$ projection does not determine $Y$, the optimization over all Gaussian inputs remains a three-dimensional problem. Using the leading-harmonic approximation for the three components (see Eq.~\ref{eq:Bv_r_th_mt}), the calculation in SM~\ref{sec:ROM_gaussian_States} gives $r\simeq0.262$ and $R^{(1)}\simeq1.302$. These approximate values agree with the numerical maximum $1.303$ reported in Ref.~\cite{Calcluth_2024}.

The exact series reveals that the logical images of Gaussian states asymptotically approach the vertices of the stabilizer polytope as squeezing increases. More precisely, for each vertex of $\mathcal P_{\rm STAB}$, there is a family of Gaussian inputs whose Bloch vectors under SSD approach that vertex as $r\to\infty$. Consequently, every point of $\mathcal P_{\rm STAB}$ can be approximated by a convex combination of Bloch vectors in $G_r$, with an error bounded asymptotically by
\begin{equation}
    \sup_{\vec w\in\mathcal P_{\rm STAB}} d\!\left(\vec w,\operatorname{conv}(G_r)\right) \leq\frac{2\sqrt2}{\pi}e^{-r}\bigl(1+o(1)\bigr),
    \label{eq:thm_rate}
\end{equation}
where $d(\vec w,S)=\inf_{\vec v\in S}\|\vec w-\vec v\|_2$ is the Euclidean distance from $\vec w$ to the set $S$, and $o(1)\to0$ as $r\to\infty$. The converse containment is treated in SM~\ref{sec:proof_thm1}: taking $r\to\infty$ at fixed squeezing angle, displacement, and symplectic eigenvalue, every limiting Bloch vector lies on a coordinate axis. Furthermore, explicit pure Gaussian families approach all six stabilizer vertices: axis-aligned squeezing gives a doubly exponential approach to the $X$ and $Z$ vertices, while diagonal squeezing gives a single exponential approach to the $Y$ vertices. Convex combinations of these vertex approximations give Eq.~\eqref{eq:thm_rate} (SM~\ref{sm:thm_rate}).  Thus, $\operatorname{conv}(G_\infty)=\mathcal P_{\rm STAB}$, where $G_\infty =\lim_{r\to\infty}G_r$.

Unlike $X$ and $Z$, the component $Y$ depends on the off-diagonal covariance $V_{pq}$ and therefore couples the two quadratures. To obtain a simple approximation to the three-dimensional image, we retain  the $m=0$ term of $X$ and $Z$ and the $(m,n)=(0,0)$ and $(0,-1)$ terms of $Y$ in Eq.~\eqref{Eq:XYZ-component}. For undisplaced inputs, $q_0=p_0=0$, this gives
\begin{equation}
\vec R_\rho\simeq\frac4\pi\qty(e^{-\frac{\pi V_{pp}}2}, \frac4\pi e^{-\frac\pi2(V_{qq}+V_{pp})}\sinh(\pi V_{pq}), e^{-\frac{\pi V_{qq}}2}).
\label{eq:Bv_r_th_mt}
\end{equation}
Within this approximation, fixing $X$ and $Z$ fixes $V_{pp}$ and $V_{qq}$. Since $|Y|$ increases with $|V_{pq}|$, its largest value is obtained for a pure state, for which
$V_{pq}^2=V_{qq}V_{pp}-\tfrac14$. Eliminating the covariance entries gives $Y\simeq\pm XZ\sinh\left(\sqrt{4\ln\Big(\frac{\pi X}{4}\Big)\ln\Big(\frac{\pi Z}{4}\Big)-\frac{\pi^2}{4}}\right)$. This equation gives the largest $|Y|$ within the undisplaced leading-harmonic approximation. The same approximation gives simple curves in the $XZ$ plane. For the pure, axis-aligned states that generate the exact boundary, $V_{pq}=0$ and $V_{qq}V_{pp}=\tfrac14$, giving
\begin{equation}\label{eq:approx_XZ_mt}
    \ln\Big(\frac{\pi X}{4}\Big) \ln\Big(\frac{\pi Z}{4}\Big) \simeq\frac{\pi^2}{16}.
\end{equation}
This corresponds to the orange curve in the left panel of Fig.~\ref{fig:XZ_set}. For pure states with fixed squeezing $r$, $V_{qq}+V_{pp}=\cosh(2r)$, and the approximation instead gives $XZ\simeq\frac{16}{\pi^2} e^{-\frac{\pi}{2}\cosh(2r)}$. Figure~\ref{fig:XZ_set} compares the exact $XZ$ boundary with its leading-harmonic approximation and shows the exact projections at fixed squeezing. The approximation closely follows the exact boundary away from the coordinate axes, while near the axes it can extend outside the Bloch disk. The exact projection extends beyond the stabilizer diamond, while the fixed-$r$ projections become increasingly concentrated near the axes as $r$ increases.

\textit{Certification from Gaussian geometry---}The preceding geometry can be read in reverse: observable logic-state statistics constrain the squeezing required of a Gaussian input. This gives an exact test for a single preparation and bounded-squeezing benchmarks for ensembles in a prepare-and-measure (PAM) scenario~\cite{brask2026quantum}.

For any Gaussian state with $0\le r\le r_0$, the fixed-variance bounds give (SM~\ref{sec:exact_XZ}) $\max(|X|,|Z|)\le A\left(\tfrac12e^{-2r_0}\right)$. The bound is attained by an undisplaced squeezed vacuum aligned with one of the two quadratures. Under the assumptions of Gaussian input states and faithful SSD extraction, a violation by either logical component certifies $r>r_0$, requiring no optimization over displacements or squeezing angles. The same idea can be extended to a PAM witness, combining measurement statistics from an ensemble of prepared states. Maximizing the witness over Gaussian preparations with $r\leq r_0$ gives a bound whose violation certifies that at least one preparation has squeezing greater than $r_0$. We consider a PAM scenario with three qubit preparations $\{\vec{r}_x\}_{x=0}^2$ and two dichotomic measurements $\{M_{b|y}\}_{y=0}^1$, and $b=0,1$. The following witness~\cite{Taoutioui2025} distinguishes classical-bit, stabilizer, and general qubit preparations:
\begin{equation}
S_3(t) = 2t\|\vec{r}_0+\vec{r}_1-\vec{r}_2\|_2 + 2(1-t)\|\vec{r}_0-\vec{r}_1\|_2,
\end{equation}
where $\vec{r}_0, \vec{r}_1, \vec{r}_2$ are the Bloch vectors of the prepared qubit states and $t \in [0,1]$ is a free parameter. It distinguishes them by having a different maximum value for each type of preparation. In particular, the classical and quantum bounds are given in Ref.~\cite{Taoutioui2025} and the stabilizer bound in Ref.~\cite{Zamora2025}. They read
\begin{align}
    S_3^C(t)&=\max\{4-2t,6t\}, \\ 
    S_3^Q(t)&=2t+4\sqrt{t^2+(1-t)^2},\\
    S_3^{\text{STAB}}(t)&=\max\{6t, 4-2t, 2t\sqrt{5}+2(1-t)\sqrt{2}\}. 
\end{align}
\begin{figure*}
    \centering
    \includegraphics{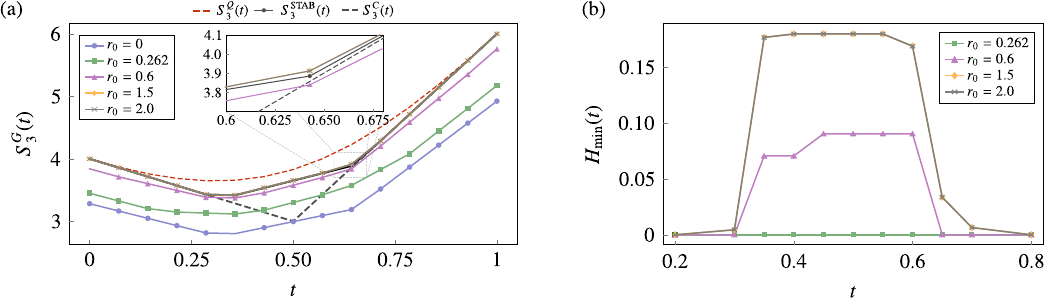}
    \caption{\textbf{Certification from Gaussian geometry.} (a) The red dashed, black solid, and black dashed lines show the qubit, stabilizer, and classical bounds, respectively. Colored markers show the numerically optimized values for Gaussian states with squeezing $r\leq r_0$ under the SSD map. The vacuum cannot violate the classical bound, so nonzero squeezing is required. The inset shows a violation of the stabilizer bound near $t\approx0.643$. (b) Worst-case conditional min-entropy $H_{\min}$ for the generation setting $(x=2,y=1)$ versus the tilting parameter $t$, for different squeezing bounds $r_0$. The result was obtained with the level $L(1,2)$ of the SDP relaxation introduced in Ref.~\cite{sarubi2026}. Since $S_3(t)$ does not fix all correlators, $H_{\min}$ is minimized over Gaussian configurations attaining the maximal $S_3(t)$. No randomness is certified for $r_0\leq0.262$. At higher squeezing it becomes nonzero and saturates at approximately $0.18$ bits for $r_0\geq1.5$.}
    \label{F-certification}
\end{figure*}

Our framework allows the PAM witness $S_3(t)$ to be mapped into a squeezing witness. Formally, for a maximum squeezing parameter $r_0$, we define the Gaussian value $S_3^G(t;r_0) \equiv \max_{\vec{r}_x \in G_{\le r_0}} S_3(t)$. Because the set of accessible states satisfies $G_{\le r_0} \subseteq G_{\le r_0'}$ for $r_0 < r_0'$, the bound $S_3^G(t;r_0)$ is a non-decreasing function of $r_0$. Consequently, for any $t$, an observed value exceeding this bound certifies a squeezing larger than $r_0$:
\begin{equation}
\!\!S_3(t)\!>\!S_3^G(t;r_0) \Longrightarrow \text{at least one preparation has }r>r_0. \label{eq:certification}
\end{equation}
Figure~\hyperref[F-certification]{\ref{F-certification}(a)} plots the $S_3(t)$ curves for $r \le r_0$. A wide gap between the bounds allows the PAM witness to discriminate weakly squeezed states; however, this gap closes at higher $r$, reducing sensitivity in the strongly squeezed regime.

As shown in the inset of Fig.~\hyperref[F-certification]{\ref{F-certification}(a)}, GKP-encoded Gaussian states can slightly violate the stabilizer bound $S_3^{\text{STAB}}$, allowing the certification of ``set-magic"~\cite{Wagner2025} within the prepared Gaussian states---a relational resource characterizing whether an ensemble of states can be rigidly rotated into the stabilizer polytope. As detailed in SM~\ref{sec: set magic}, for any $0<t<1$ there exists a finite squeezing value $r(t)=\log[S_3^{\mathrm{STAB}}(t)/k(t)]$ where $k(t)=4\min\{t/\sqrt{5},1-t\}$ and a pure preparation with this squeezing (or above) that certifies set-magic. The proof gives a consistent construction of such pure preparations whose logical states lie in the XZ plane, all having robustness of magic strictly less than the vacuum. These logical states correspond to the pure Gaussian states with covariances $\frac12\operatorname{diag }(e^{2r(t)},e^{-2r(t)})$ and $\frac12\operatorname{diag}(e^{-2r(t)},e^{2r(t)})$ along with appropriate displacement by $(m\sqrt\pi,n\sqrt\pi)$ where $m,n\in\mathbb Z$, such that all have common squeezing $r(t)$. Notably, set-magic has direct applications in communication protocols, such as enhancing the performance of quantum random access codes~\cite{Zamora2025}. An ensemble that has a significant violation combines two highly squeezed, near-vertex stabilizer states with a vacuum state displaced by $\sqrt\pi$ along one quadrature, whose logical image is a noisy magic state~(see SM~\ref{sec:finiteq_Squeezing}).

Furthermore, the ability of Gaussian states to violate the classical bound of the $S_3(t)$ witness allows for the certification of randomness, which is a crucial primitive in cryptographic protocols. We consider the generation of a binary outcome from preparation $\vec{r}_{x=2}$ and measurement $M_{y=1}$, and quantify the certified randomness by the conditional min-entropy $H_{\min}(B|E)=-\log_2 P_{\mathrm{guess}}(B|E)$, where $P_{\mathrm{guess}}(B|E)$ denotes the probability of Eve guessing correctly Bob's outcome. Following the semidefinite-programming approach of Ref.~\cite{sarubi2026}, we bound Eve's optimal guessing probability using an adapted Navascués--Vértesi hierarchy~\cite{navascues2015}. We fix the six correlators $E_{xy} = \vec{r}_x \cdot \vec{q}_y$, which represent the expectation values of the dichotomic measurement $y$ (with Bloch vector $\vec{q}_y$) performed on the prepared state $x$, and maximize Eve's guessing probability over moment matrices compatible with these correlations and the prepare-and-measure constraints.   This provides a certified lower bound on the randomness of the generated bit within the assumed prepare-and-measure model. Since the value $S_3(t)$ does not uniquely determine the correlator associated with the generation setting, we determine the minimum $H_{\min}$ among the physical Gaussian configurations that attain the maximal witness value. See SM.~\ref{sec:SM_randomness} for a detailed discussion. Applying this framework, Fig.~\hyperref[F-certification]{\ref{F-certification}(b)} shows the resulting worst-case certified min-entropy as a function of $t$ for different squeezing bounds $r \leq r_0$. At moderate squeezing ($r_0 \leq 0.262$, for example), no randomness is certified. As the squeezing bound increases, the witness exceeds the classical bound and certifies nonzero randomness. Interestingly, the certified min-entropy saturates at approximately $0.18$ bits for $r_0 \gtrsim 1.5$. Thus, the SSD mapping of Gaussian states not only enables violations of the classical and stabilizer PAM bounds, but also allows for certified randomness generation.

\textit{Discussion---}In this work, we derive an exact parametrization of the stabilizer subsystem decomposition, mapping arbitrary single-mode Gaussian states into a nontrivial region of the logical qubit Bloch sphere. By exactly characterizing the $XZ$ projection, we establish its convexity and identify the pure, axis-aligned states that dictate its boundary, also showing that the vacuum has the highest robustness of magic within this projection. 

Beyond our analytical characterization, our framework yields practical certification tools. It provides a tight squeezing criterion for Gaussian preparations. Moving from single states to ensembles, we proved that the logical image of Gaussian states with bounded squeezing can violate both stabilizer and classical bounds of PAM scenarios. Violating the stabilizer bound certifies ``set-magic"~\cite{Wagner2025}, while breaking classical bounds enables the certification of cryptographic randomness. More broadly, these results reinforce the utility of the GKP framework for accessing nonclassical correlations, complementing recent work on Bell nonlocality with GKP states~\cite{Yang2026,lopeteguigonzález2026,salek2026}.



Finally, our results clarify the relationship between Gaussianity and logical non-stabilizerness. The emergence of non-stabilizerness post-SSD should not be misinterpreted as the generation of a DV resource from a ``free" CV state via a free operation. Because the SSD map is inherently non-free within the resource theory of non-Gaussianity and more generally within the theory of Wigner Negativity \cite{rthwig}, what it actually exposes is the rich, underlying resource geometry induced by GKP extraction. Furthermore, because our exact boundary characterization focuses on the $XZ$ projection, it naturally isolates the real part of the logical qubit. This leaves the generation of logical imaginarity~\cite{Wu_2021} as a distinct operational resource to be explored. A natural future direction is to extend this geometric framework to non-Gaussian states. Investigating how specific non-Gaussian resources, such as photon subtraction or non-Gaussian noise, expand the accessible region in the Bloch sphere could provide a unified understanding of how continuous-variable non-Gaussianity translates into discrete-variable non-stabilizerness. Furthermore, it would be interesting to determine how this geometry changes under multimode extensions, finite-energy GKP code states, and realistic extraction noise, and whether the operational advantages of bounded squeezing persist in experimentally implementable settings.

\let\oldaddcontentsline\addcontentsline
\renewcommand{\addcontentsline}[3]{}
\textit{Acknowledgments---}We acknowledge financial support from the Simons Foundation (Grant No. 1023171, R.C.), the Brazilian National Council for Scientific and Technological Development (CNPq; Grants No. 403181/2024-0, 315081/2025-2 and 301687/2025-0), the EU Horizon Europe programme (QSNP, Grant No. 101114043; CLUSTEC, Grant No. 101080173; and ClusterQ, Grant No. 101055224), the National Institute of Science and Technology for Applied Quantum Computing through CNPq Process No. 408884/2024-0, and the Financiadora de Estudos e Projetos (Grant No. 1699/24 IIF-FINEP). R.C. thanks the Technical University of Denmark for its hospitality, where part of this work was carried out during a guest professorship supported by the Otto M\o nsted Foundation. S.Z. acknowledges the Coordenação de Aperfeiçoamento de Pessoal de Nível Superior -- Brasil (CAPES) -- Finance Code 001 and the support of the Natural Sciences and Engineering Research Council of Canada (NSERC), (CIRTA-P-614385-2026)

\textit{Code Availability---}The code used to generate the plots is available at~\href{https://gitlab.com/data_availability/gaussian_states_under_ssd}{Gitlab}.

\bibliographystyle{apsrev4-2}
\bibliography{2-references}
\clearpage


\onecolumngrid
\begin{center}
    \Large \bfseries {Supplementary Material}
\end{center}
\vspace{1em}

\let\addcontentsline\oldaddcontentsline
\begingroup
\parskip=0pt
\setcounter{tocdepth}{3}
\tableofcontents
\endgroup

\onecolumngrid

\section{Background}
\subsection{Continuous variable systems and GKP codes}
\subsubsection{Conventions: The Weyl operator and the characteristic function}
Throughout, $\qh,\ph$ are the canonical position and momentum operators of a single bosonic mode, satisfying $[\qh,\ph]=i$ ($\hbar=1$). We define the Weyl displacement operator as $\Wop{u}{v} \equiv e^{i(u\qh+v\ph)}$, for $(u,v)\in\mathbb R^2$. They obey the composition rule 
 \begin{equation}
     \Wop{a}{b}\Wop{c}{d} = \Wop{a+c}{b+d} e^{-\frac{i}{2}(ad-bc)},\label{eq:Composition}
 \end{equation}
 and the conjugation rule,
 \begin{equation}
     \Wop{a}{b}\Wop{c}{d}\Wop{a}{b}^{-1} = \Wop{c}{d} e^{-i(ad-bc)},\label{eq:Conjugation}
 \end{equation}
for all $a,b,c,d \in\mathbb{R}$.

Both rules follow from the Baker--Campbell--Hausdorff formula for operators,  and the fact that $\Wop{a}{b}^{-1}=\Wop{a}{b}^\dagger=\Wop{-a}{-b}$.  This can be shown as follows.
First, for Eq.~\eqref{eq:Composition}, let $A=i(a\qh+b\ph)$ and $B=i(c\qh+d\ph)$. Since $[\qh,\ph]=i$,
\begin{equation}
[A,B]= i^2\big(ad[\qh,\ph]+bc[\ph,\qh]\big)  = -i(ad-bc),
\end{equation}
which is a $c$-number, hence commutes with both $A$ and $B$. The Baker--Campbell--Hausdorff formula for operators, $e^Ae^B=e^{A+B}e^{\frac12[A,B]}$, then gives $e^Ae^B = e^{i[(a+c)\qh+(b+d)\ph]} e^{-\frac{i}{2}(ad-bc)}= \Wop{a+c}{b+d}e^{-\frac{i}{2}(ad-bc)}$, therefore $\Wop{a}{b}\Wop{c}{d} = \Wop{a+c}{b+d}e^{-\frac{i}{2}(ad-bc)}$.

For Eq.~\eqref{eq:Conjugation}, we first note that since $\qh,\ph$ are Hermitian, $\Wop{a}{b}^\dagger=e^{-i(a\qh+b\ph)}=\Wop{-a}{-b}$. Furthermore, using the composition law,   $\Wop{a}{b}\Wop{-a}{-b}=\Wop{0}{0}e^{-\frac i2(a(-b)-b(-a))}=\hat I\cdot e^0=\hat I$. Therefore the operators satisfy $\Wop{a}{b}^{-1}=\Wop{a}{b}^\dagger=\Wop{-a}{-b}$. Using this and  by applying the composition law twice we get
\begin{align*}
\Wop{a}{b}\Wop{c}{d}\Wop{-a}{-b} &= \Wop{a+c}{b+d}e^{-\frac i2(ad-bc)}\Wop{-a}{-b}\\
&= \Wop{c}{d}e^{-\frac i2\big[(a+c)(-b)-(b+d)(-a)\big]}e^{-\frac i2(ad-bc)}\\
&= \Wop{c}{d}e^{-\frac i2(da-cb)}e^{-\frac i2(ad-bc)} \\ &= \Wop{c}{d}e^{-i(ad-bc)},
\end{align*}
obtaining Eq.~\eqref{eq:Conjugation}.
 With this convention then, for a density operator $\hat\rho$, its (symmetrically-ordered) characteristic function is 
 \begin{equation}
     \chi_\rho(u,v)\equiv\Tr\big(\hat\rho\Wop{u}{v}\big).
 \end{equation}
In particular, for Gaussian states, the characteristic function is itself Gaussian. Following our convention any single mode Gaussian state has a characteristic function of the form 
\begin{equation}
    \chi_G(\vec{\xi}) = \exp\left( i\vec{\xi}^{\mathrm T}\vec{r} -\frac12 \vec{\xi}^{\mathrm T}V\vec{\xi}\right),\label{eq:Gaussian_char}
\end{equation}
where $\vec{\xi} = (u,v)^{\intercal}\in \mathbb R^2$, $\vec{r} = (q_0, p_0)^{\intercal} = (\langle \qh\rangle,\langle \ph\rangle)^{\intercal}$, and $V$ the covariance matrix
\begin{equation}
V=\begin{pmatrix}
    V_{qq} & V_{qp}\\
    V_{pq} & V_{pp}
  \end{pmatrix},
\end{equation}
with entries
\begin{align}
V_{qq}&= \left\langle (\Delta\qh)^2\right\rangle,\nonumber\\
V_{pp}&= \left\langle (\Delta\ph)^2\right\rangle,\nonumber\\
V_{qp}&= V_{pq}=\frac12\left\langle\Delta\qh\Delta\ph+\Delta\ph\Delta\qh\right\rangle,\nonumber
\end{align}
where $\Delta\qh=\qh-q_0$, and similarly for $\ph$. The covariance matrix satisfies the Schr\"odinger--Robertson uncertainty relation $\det V\geq\frac{1}{4}$, where $\det V=V_{qq}V_{pp}-V_{qp}^2$ is the determinant of $V$. For a general single-mode Gaussian state, $V$ can be written as \mbox{$V=\nu R(\theta)S(2r)R(\theta)^T$}, where $\nu=\sqrt{\det V}\geq\tfrac12$ is the symplectic eigenvalue of $V$. The squeezing $\hat S(r)$ and the rotation $\hat R(\theta)$ matrices are given by
\begin{equation}
    S(r) =\begin{pmatrix}
        e^{-r}&0\\
        0&e^{r}
    \end{pmatrix},\quad R(\theta) =\begin{pmatrix}
        \cos(\theta)&-\sin(\theta)\\
        \sin(\theta)&\cos(\theta)
    \end{pmatrix},
\end{equation}
which implies that for pure states, the covariance matrix is parameterized by~\cite{Weedbrook_2012}
\begin{align}
    &V_{qq} = \frac{1}{2}(\cosh(2r)-\cos(2\theta)\sinh(2r))\nonumber\\
    &V_{pp} = \frac{1}{2}(\cosh(2r)+\cos(2\theta)\sinh(2r))\label{eq:V_parameterized}\\
    &V_{qp} = -\frac{1}{2}\sin(2\theta)\sinh(2r).\nonumber
\end{align}
 \subsubsection{The GKP code}
 Define the ideal GKP code state as
 \begin{equation}
     |l_\GKP\rangle \equiv \sum_{n\in\mathbb Z} \big|\qh=(2n+l)\sqrt\pi\big\rangle, \label{eq:GKP_state}
 \end{equation}
for $l\in\{0,1\}$. The square GKP stabilizer lattice has spacing $2\sqrt\pi$ in both Weyl directions, the logical Pauli displacements have spacing $\sqrt\pi$, and the syndrome cell used below is $[-\sqrt\pi/2,\sqrt\pi/2)^2$. In the ideal case, $|l_\GKP\rangle$ is not a normalizable Hilbert-space vector; this is the standard idealization in the GKP literature~\cite{GKP2001,Terhal2016,Grimsmo2021}. For $m,n\in\mathbb Z$, define the stabilizers $\Sop{m}{n}\equiv\Wop{2m\sqrt\pi}{2n\sqrt\pi}$. The $\ph$ part translates the comb by $2n\sqrt\pi$, while the $\qh$ part gives the phase $e^{i2\pi m(2k+l)}=1$. The composition rule~\eqref{eq:Composition} then gives $\Sop{m}{n}\Sop{m'}{n'}=\Sop{m+m'}{n+n'}$. Similarly, we define the specific displacement operators~\cite{GKP_2001}
 \begin{align}
&\hat Z \equiv \Wop{\sqrt\pi}{0}=e^{i\sqrt\pi\qh},\\
&\hat X \equiv \Wop{0}{-\sqrt\pi}=e^{-i\sqrt\pi\ph},\\
&\hat Y \equiv i\hat X\hat Z = \Wop{\sqrt\pi}{-\sqrt\pi}.
 \end{align}
One can see that applying $\hat Z$ to Eq~\eqref{eq:GKP_state}, one obtains a phase of $e^{i(2n+l)\pi} = (-1)^l$ while applying $\hat X$ produces a bit flip. $\hat Y$ follows from applying the composition law to the product $\hat X\hat Z$. In particular it is such that $\hat Y|0_\GKP\rangle=i|1_\GKP\rangle$, $\hat Y|1_\GKP\rangle=-i|0_\GKP\rangle$.  While these operators act on the full infinite-dimensional space, they implement the standard qubit Pauli operations when restricted to the code space $\mathrm{span}\{|0_\GKP\rangle,|1_\GKP\rangle\}$. This restriction yields the $2\times2$ logical operators $\hat Z_L = \hat\Pi\hat Z\hat\Pi$, $\hat X_L = \hat\Pi\hat X\hat\Pi$, and $\hat Y_L = \hat\Pi\hat Y\hat\Pi$ used in our Bloch vector decomposition.

Finally, we introduce the code projector 
\begin{equation}
    \Pih = \frac1{\sqrt\pi}\sum_{m,n\in\mathbb Z}\Sop{m}{n}.\label{eq:Projectors_Stabilizers}
\end{equation}
This is the Poisson-summation representation of the distributional code-space kernel $\Pih\equiv|0_\GKP\rangle\langle0_\GKP|+|1_\GKP\rangle\langle1_\GKP|$ in terms of the code's stabilizer operators. The proof of the above identity is as follows:
\begin{proposition}[Stabilizer representation of $\Pih$]\label{thm:poisson}
\begin{equation}
\Pih = \frac1{\sqrt\pi}\sum_{m,n\in\mathbb Z}\Sop{m}{n}.    
\end{equation}
\end{proposition}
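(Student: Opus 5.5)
The plan is to compare the action of both sides on position eigenstates (equivalently, their kernels in the $\qh$ representation), treating all identities distributionally. First I will compute the action of a single stabilizer. By the composition rule \eqref{eq:Composition}, $\Sop{m}{n}=\Wop{2m\sqrt\pi}{2n\sqrt\pi}=e^{i2m\sqrt\pi\qh}e^{i2n\sqrt\pi\ph}\,e^{-\frac i2(2m\sqrt\pi)(2n\sqrt\pi)}$. The last phase is $e^{-2\pi i mn}=1$. Since $e^{i2n\sqrt\pi\ph}$ acts on position eigenstates as $|x\rangle\mapsto|x-2n\sqrt\pi\rangle$, we get $\Sop{m}{n}|x\rangle=e^{i2m\sqrt\pi(x-2n\sqrt\pi)}|x-2n\sqrt\pi\rangle=e^{i2m\sqrt\pi x}|x-2n\sqrt\pi\rangle$.

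Next I sum over $m$ using the Poisson summation formula in distributional form, $\sum_{m\in\mathbb Z}e^{i2\pi m y}=\sum_{k\in\mathbb Z}\delta(y-k)$. With $y=x/\sqrt\pi$ this reads
\begin{equation}
\sum_{m}e^{i2m\sqrt\pi x}=\sum_k\delta\!\left(\tfrac{x}{\sqrt\pi}-k\right)=\sqrt\pi\sum_k\delta(x-k\sqrt\pi).
\end{equation}
The $\sqrt\pi$ produced here cancels the prefactor $1/\sqrt\pi$. The $m$-sum therefore acts as the multiplication operator $\sqrt\pi\sum_k|k\sqrt\pi\rangle\langle k\sqrt\pi|$, where the bras and kets are understood in the $\delta$-normalized sense. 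This gives
\begin{equation}
\frac1{\sqrt\pi}\sum_{m,n}\Sop{m}{n}=\sum_{n}\sum_{k}|k\sqrt\pi-2n\sqrt\pi\rangle\langle k\sqrt\pi|.
\end{equation}
Here I used that the $m$-sum can be placed to the right of the translation, because the phase was evaluated at the pre-translation position.

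Finally I reorganize the double sum by parity. Write $k=2j+l$ with $l\in\{0,1\}$, and reindex $n\mapsto j-n'$. The ket becomes $|(2n'+l)\sqrt\pi\rangle$, with $n'$ ranging over $\mathbb Z$ independently of $j$. The sum then factorizes as
\begin{equation}
\sum_{l=0,1}\Big(\sum_{n'}|(2n'+l)\sqrt\pi\rangle\Big)\Big(\sum_j\langle(2j+l)\sqrt\pi|\Big)=\sum_l|l_\GKP\rangle\langle l_\GKP|,
\end{equation}
which is the claim by the definition \eqref{eq:GKP_state}.

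The main obstacle is rigor rather than algebra. Both sides are non-normalizable, so the identity must be stated as an equality of tempered-distribution kernels, for example tested against Schwartz functions $\psi,\phi$ through $\langle\phi|\cdot|\psi\rangle$. I would justify exchanging the order of the sums in this weak sense, where Poisson summation is standard. I would also keep careful track of the normalization, since $\delta(x/\sqrt\pi-k)=\sqrt\pi\,\delta(x-k\sqrt\pi)$ is exactly what fixes the prefactor $1/\sqrt\pi$.
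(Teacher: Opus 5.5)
Your proof is correct. It starts from the same comparison of position kernels as the paper, but then works in the opposite representation.

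The paper moves everything to the Fourier side. It Poisson-sums the two GKP combs on the left into plane-wave series $\frac{1}{2\sqrt\pi}\sum_k(-1)^{kl}e^{ik\sqrt\pi x}$. On the right it Poisson-sums the translation comb $\sum_n\delta(x-x'+2n\sqrt\pi)$. It then matches the two double Fourier series after splitting the frequencies by parity.

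You apply Poisson summation once, to the phase sum over $m$. This turns $\frac1{\sqrt\pi}\sum_m e^{i2m\sqrt\pi\qh}$ into the comb $\sum_k|k\sqrt\pi\rangle\langle k\sqrt\pi|$. You then recover $\sum_l|l_\GKP\rangle\langle l_\GKP|$ directly by the reindexing $k=2j+l$, $n=j-n'$, and the left-hand side never has to be expanded.

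What your route buys is economy and transparency:
\begin{itemize}
\item It needs a single Poisson summation, with no coefficient matching.
\item The prefactor $1/\sqrt\pi$ is fixed visibly by $\delta(x/\sqrt\pi-k)=\sqrt\pi\,\delta(x-k\sqrt\pi)$.
\item The roles of the two stabilizer families are explicit. The $m$-sum restricts to $\qh\in\sqrt\pi\mathbb Z$, and the $n$-sum averages over $2\sqrt\pi$ translations, which preserve the parity label $l$.
\end{itemize}
The paper's route buys a comparison of both sides in one common representation, where equality is a literal match of Fourier coefficients. Both arguments are at the same formal, distributional level of rigor.

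Your proposed weak-sense justification goes through easily. For Schwartz $\phi,\psi$, the terms $\int \psi(x)\,\overline{\phi(x-2n\sqrt\pi)}\,e^{i2m\sqrt\pi x}\,dx$ decay rapidly in both $m$ and $n$. The double sum is therefore absolutely convergent, and the order of summation does not matter.

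One cosmetic slip: by Eq.~\eqref{eq:Composition}, $\Wop{a}{b}=e^{ia\qh}e^{ib\ph}e^{+iab/2}$, so the phase you wrote has the wrong sign. It equals $e^{\pm 2\pi i mn}=1$ either way, so nothing changes.
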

\begin{proof}
We compare the position-space matrix elements $\langle x|\Pih|x'\rangle$ on both sides.

\emph{Left side.} Write $\mathrm{comb}_l(x)=\sum_n\delta\big(x-(2n+l)\sqrt\pi\big)$, so $\langle x|\Pih|x'\rangle=\sum_{l=0,1}\mathrm{comb}_l(x)\mathrm{comb}_l(x')$. By the classical Poisson summation formula (see e.g.\ \cite{SteinShakarchi}), a Dirac comb of period $a$ offset by $x_0$ has Fourier series
\begin{equation}
\sum_{n\in\mathbb Z}\delta(x-na-x_0)=\frac1a\sum_{k\in\mathbb Z}e^{2\pi ik(x-x_0)/a}.    
\end{equation}

With $a=2\sqrt\pi$, $x_0=l\sqrt\pi$, this gives $\mathrm{comb}_l(x)=\frac1{2\sqrt\pi}\sum_k(-1)^{kl}e^{ik\sqrt\pi x}$. Hence
\begin{align}
\langle x|\Pih|x'\rangle &= \frac1{4\pi}\sum_{k,k'}e^{ik\sqrt\pi x}e^{ik'\sqrt\pi x'}\sum_{l=0,1}(-1)^{(k+k')l} \nonumber\\
&=\frac1{2\pi}\sum_{\substack{k,k'\\ k+k'\ \mathrm{even}}} e^{ik\sqrt\pi x}e^{ik'\sqrt\pi x'},
\end{align}
using $\sum_{l=0,1}(-1)^{(k+k')l}=2$ if $k+k'$ even, $0$ otherwise. Splitting into ``both even'' ($k=2p,k'=2q$) and ``both odd'' ($k=2p+1,k'=2q+1$) pieces,
\begin{align}
&\langle x|\Pih|x'\rangle= \nonumber\\
&\frac1{2\pi}\Big[\sum_{p,q}e^{i2p\sqrt\pi x}e^{i2q\sqrt\pi x'}+\sum_{p,q}e^{i(2p+1)\sqrt\pi x}e^{i(2q+1)\sqrt\pi x'}\Big]. \label{eq:LHS}
\end{align}

\emph{Right side.} Using $\Wop{a}{b}|x'\rangle=e^{iab/2}e^{ia(x'-b)}|x'-b\rangle$ (consider $\Wop{a}{b}=e^{ia\qh}e^{ib\ph}e^{iab/2}$), with $(a,b)=(2m\sqrt\pi,2n\sqrt\pi)$,
\begin{equation}
\langle x|\Sop{m}{n}|x'\rangle = e^{i2m\sqrt\pi x'}\delta(x-x'+2n\sqrt\pi)    
\end{equation}

(the phase $e^{i\cdot2m\sqrt\pi\cdot2n\sqrt\pi/2}e^{-i2m\sqrt\pi\cdot2n\sqrt\pi}=e^{-2\pi imn}=1$). Summing over $n$ produces another Dirac comb, now in the variable $y=x-x'$, with period $2\sqrt\pi$. By Poisson summation again, $\sum_n\delta(y+2n\sqrt\pi)=\frac1{2\sqrt\pi}\sum_k e^{ik\sqrt\pi y}$. Therefore
\begin{align}
&\frac1{\sqrt\pi}\sum_{m,n}\langle x|\Sop{m}{n}|x'\rangle\nonumber\\
&=\frac1{2\pi}\sum_{m,k}e^{i2m\sqrt\pi x'}e^{ik\sqrt\pi(x-x')}\nonumber\\
&=\frac1{2\pi}\sum_{m,k}e^{ik\sqrt\pi x}e^{i(2m-k)\sqrt\pi x'}.
\end{align}
For fixed $k$, as $m$ ranges over $\mathbb Z$, $j\equiv 2m-k$ ranges over all integers of the same parity as $k$. Splitting the outer sum by the parity of $k$ and relabeling ($k=2p$, $j=2q$ for even and $k=2p+1$, $j=2q+1$ for odd) gives
\begin{align}
&\frac1{\sqrt\pi}\sum_{m,n}\langle x|\Sop{m}{n}|x'\rangle = \frac1{2\pi}\Big[\sum_{p,q}e^{i2p\sqrt\pi x}e^{i2q\sqrt\pi x'}+\sum_{p,q}e^{i(2p+1)\sqrt\pi x}e^{i(2q+1)\sqrt\pi x'}\Big]. \label{eq:RHS}
\end{align}
Equations~\eqref{eq:LHS} and \eqref{eq:RHS} are  identical.
\end{proof}

An important remark is that formally expanding $\Pih^2=\frac1\pi\sum_{m,n,m',n'}\Sop{m}{n}\Sop{m'}{n'}$ and using the Abelian group law, every target stabilizer $\Sop{M}{N}$ receives contributions from \emph{infinitely many} pairs $(m,n),(m',n')$ with $m+m'=M,n+n'=N$, so $\Pih^2$ diverges rather than reproducing $\Pih$. This is expected since $\Pih$, like $|x\rangle\langle x|$, is a formal/distributional object, not a bounded operator on the Hilbert space, and $\Pih^2$ is simply not a well-posed expression on its own. It only needs to be well-defined when integrated against a trace-class $\hat\rho$ inside the syndrome integral of definition~\eqref{eq:SSDmap} below. 
\subsubsection{The stabilizer subsystem decomposition}\label{sec:SSD}
For a family of unitaries $\hat D(\vec{s})$, $\vec{s}=(s_q,s_p)$, representing displacement by the syndrome $\vec{s}$, Ref.~\cite{Calcluth_2024} defines the stabilizer subsystem decomposition (SSD) as 
\begin{equation}
\hat\rho_L= \Lambda(\hat\rho) \equiv \frac1{\sqrt\pi}\int_{-\sqrt\pi/2}^{\sqrt\pi/2} ds_q\int_{-\sqrt\pi/2}^{\sqrt\pi/2} ds_p\ \hat\Pi\hat D(-\vec{s})\hat\rho\hat D(-\vec{s})^\dagger\hat\Pi. \label{eq:SSDmap}
\end{equation}
This equation constitutes a map from any continuous-variable state $\hat\rho$ to a discrete logical state $\hat\rho_L= \Lambda(\hat\rho)$. Now consider evaluating the expectation value of a logical operator $\hat O_L = \hat\Pi\hat O\hat\Pi$, where $\hat O \in \{\hat X, \hat Y, \hat Z\}$ is the corresponding continuous-variable displacement operator. Since $\hat O$ commutes with the stabilizers, it commutes with the code projector, yielding $\hat\Pi\hat O\hat\Pi = \hat O\hat\Pi^2$. Note that squaring the ideal projector produces a divergence, $\hat\Pi^2 \propto \delta(0)\hat\Pi$. However, as established in the previous subsection, this expression is well-defined when integrated within the SSD map. Therefore, under the syndrome integrals, one can write
\begin{align}
\Tr\big[\hat\Pi\hat D(-\vec{s})\hat\rho\hat D(-\vec{s})^\dagger\hat\Pi\hat O_L\big]&=\Tr\big[\hat D(-\vec{s})\hat\rho\hat D(-\vec{s})^\dagger\hat\Pi\hat O_L\hat\Pi\big]\nonumber\\
&=\Tr\big[\hat D(-\vec{s})\hat\rho\hat D(-\vec{s})^\dagger\hat O\hat\Pi^2\big]\nonumber\\
&=\Tr\big[\hat D(-\vec{s})\hat\rho\hat D(-\vec{s})^\dagger\hat O\hat\Pi\big],
\end{align}
so we have 
\begin{equation}
    \Tr\big(\hat\rho_L\hat O_L\big) = \frac1{\sqrt\pi}\int\int d\vec{s}\ \Tr\Big[\hat\rho\hat D(-\vec{s})^\dagger\big(\hat O\hat\Pi\big)\hat D(-\vec{s})\Big].\label{eq:OL_avg}
\end{equation}
We now note that Eq~\eqref{eq:OL_avg} is invariant under any of the following conventions for $\Dh(s)$,
\begin{enumerate}[label=(\roman*)]
\item $\Dh_{\rm A}(\vec{s})=e^{i(s_q\qh+s_p\ph)} =\Wop{s_q}{s_p}$
\item $\Dh_{\rm B}(\vec{s})=e^{i(s_q\qh-s_p\ph)}=\Wop{s_q}{-s_p}$ 
\item $\Dh_{\rm C}(\vec{s})=e^{is_q\qh}e^{-is_p\ph}$ 
\end{enumerate}
First, using the composition law~\eqref{eq:Composition} we see that $\Dh_{\rm C}(\vec{s})=e^{is_q\qh}e^{-is_p\ph}=\Wop{s_q}{-s_p}e^{is_qs_p/2}$, \textit{i.e.} $\Dh_{\rm C}(\vec{s})=\Dh_{\rm B}(s)e^{is_qs_p/2}$: a pure phase times $\Dh_{\rm B}$.  Since a scalar phase commutes with every operator, it cancels identically in any conjugation $\Dh(-\vec{s})^\dagger(\cdot)\Dh(-s)$ appearing in Eq~\eqref{eq:OL_avg}: writing $\Dh_{\rm C}(-\vec{s})=\Dh_{\rm B}(-\vec{s})\lambda(\vec{s})$ for the scalar $\lambda(\vec{s})=e^{is_qs_p/2}$ (unchanged under $\vec{s}\to-\vec{s}$, since it depends only on the product $s_qs_p$), we get $\Dh_{\rm C}(-\vec{s})^\dagger \hat O\Dh_{\rm C}(-\vec{s})=\lambda^*(\vec{s})\Dh_{\rm B}(-\vec{s})^\dagger \hat O\lambda(\vec{s})\Dh_{\rm B}(-\vec{s})=\Dh_{\rm B}(-\vec{s})^\dagger\hat O\Dh_{\rm B}(-\vec{s})$ for any operator $\hat O$. So (iii) reduces exactly to (ii). 

For (i), $\Dh_{\rm A}(-s)^\dagger=\Wop{s_q}{s_p}$; for (ii), $\Dh_{\rm B}(-\vec{s})^\dagger=\Wop{s_q}{-s_p}$. Conjugating $\Wop{c}{d}$ by $\Wop{a}{b}$ produces the phase $e^{i(bc-ad)}$; substituting $(a,b)=(s_q,\pm s_p)$ and considering $\Wop{c}{d}$ the associated operators to $\hat O\Pih$, then integrating over $s_p\in(-\sqrt\pi/2,\sqrt\pi/2)$ gives, respectively, $\int e^{is_pc}ds_p=K(c)$ and $\int e^{-is_pc}ds_p=K(-c)$. But note that  $K$ is even: $K(w)=\frac{2\sin(w\sqrt\pi/2)}{w}=K(-w)$. Hence $K(c)=K(-c)$, and the two conventions give identical integrals.

Therefore we can fix $\Dh(\vec{s})=\Dh_{\rm A}(\vec{s})=\Wop{s_q}{s_p}$.

\subsection{Discrete variable quantities}
\subsubsection{Robustness of Magic}
 Magic measures provide a useful framework for quantifying the non-stabilizerness, and hence the resourcefulness, of discrete-variable quantum states. One such measure is the robustness of magic (RoM). To define it, let $\mathcal{S}_n$ denote the set of pure stabilizer states on $n$ qubits. Since stabilizer states span the space of quantum states, an arbitrary state $\hat{\rho}$ can be written as a linear combination of elements of $\mathcal{S}_n$, i.e., $\hat{\rho} = \sum_i x_i \hat{\sigma}_i$ with $\hat{\sigma}_i \in \mathcal{S}_n$, where the coefficients $x_i$ are real and need not be positive. Such a decomposition is generally not unique, and different sets of coefficients ${x_i}$ may represent the same state. The RoM quantifies the minimum total weight required in such a decomposition, and is therefore defined as the minimum $\ell_1$-norm of the coefficients~\cite{Calcluth_2024}:
\begin{equation}
R(\hat{\rho}) =\min_{\{x_i\}}\left(\sum_i |x_i|:\hat{\rho} = \sum_i x_i\hat{\sigma}_i\right).
\label{eq:ROM}
\end{equation}
Every state in the stabilizer polytope, $\operatorname{conv}(\mathcal{S}_n)$, has a convex decomposition into pure stabilizer states with total coefficient weight one. Together with trace normalization, this gives $R(\hat\rho)=1$ for every stabilizer state. Thus, $R(\hat\rho)>1$ signals non-stabilizerness.

For a single qubit, the RoM admits a particularly simple expression in terms of the state's Bloch-vector components. Specifically,
\begin{equation}
R^{(1)}(\hat{\rho})=\max\biggl\{1,\left|\Tr(\hat{\rho}\hat{X})\right|+\left|\Tr(\hat{\rho}\hat{Y})\right|+\left|\Tr(\hat{\rho}\hat{Z})\right|\biggl\},
\label{eq:single_qubit_ROM}
\end{equation}
where here we are using $\hat{X}$, $\hat{Y}$, and $\hat{Z}$ to denote the Pauli operators. This expression directly relates the RoM to the components of the Bloch vector and provides a convenient way of evaluating the amount of magic present in a single-qubit state.
\subsubsection{Prepare-and-measure witnesses}\label{sec:pam_scenario}
The prepare-and-measure scenario is constituted by a party ($A)$ that prepares and sends physical systems to another party ($B$) which measure states. Each party performs its duty based on a random variable ($X$ for $A$, $Y$ for $B$): for example $A$ prepares the state $\rho_{X=x}$  while $B$ measures the observable $M_{Y=y}$. Similarly to  other experimental scenarios (such as the Bell scenario), the prepare-and-measure scenario can be described through the lens of causal modeling and it also consitutes a platform for the study of semi-device-independent protocols. However, in this work we are interested in the purely geometrical aspect of the witnesses that it provides and in principle we disregard the possible device-independent certification properties. In this work we focus in the inequality~\cite{Taoutioui2025},which has been of interest in the study of magic~\cite{Zamora2025}.  
\begin{equation}
    S_3(t) = 2t(E_{11}+E_{21}-E_{31}) + 2(1-t)(E_{12}-E_{22}),\label{eq: tilted_s3}
\end{equation} 
defined by $t\in[0,1]$. Here $E_{xy}=p(0|xy)-p(1|xy)$ and $p(b|xy)$ is the probability of obtaining the outcome $b$, given $B$ measured the observable $y$ of the states $\rho_x$. In the same reference, the authors obtained the quantum value (the maximum value possible using quantum theory)
    \begin{equation}
        S_3^Q(t) = 2t + 4\sqrt{t^2 + (1-t)^2},
    \end{equation}
and the  classical value (the maximum value possible using classical probabilistic theory)
\begin{equation}
    S_3^C(t) =
    \begin{cases}
        4-2t\quad &\text{if}\quad t <1/2\\
        6t \quad  &\text{if}\quad 1/2\leq t
    \end{cases}.  
\end{equation}
More recently, in Ref.~\cite{Zamora2025}, the inequality was used to obtain a  stabilizer bound(the maximum value possible using stabilizer states and arbitrary measurements) for this inequality
\begin{equation}
        S_3^{\mathrm{STAB}}(t) = \max\left(6t, \ 4-2t, \ 2t\sqrt{5} + 2(1-t)\sqrt{2}\right),\label{eq:Stab_bound_S3}
    \end{equation}
When considering quantum measurements, after optimizing analytically over them, it can be shown that $S_3(t)$ may be written in terms of only the Bloch vectors of the three states giving the expression 
\begin{equation}
 S_3(t) = 2t\|\vec{r}_1 + \vec{r}_2 -\vec{r}_3\|_2  + 2(1-t)\|\vec{r}_1-\vec{r}_2\|_2.\label{S3_tilted_r}
\end{equation} 
Although we will use this expression for our results, we emphasize that  considering the SSD map as a method for  the preparation of the DV states sent in a prepare-and-measure scenario constitutes a very strong assumption, imposing an obstacle to a SDI certification of CV systems in this approach.

\section{Results}
We next derive a lattice expression for $\Tr_L(\hat\rho_L\hat O_L)$ when the series is absolutely convergent, in particular for every finite-covariance Gaussian input considered here. The operator $\hat O_L$ is the ordinary logical matrix induced by a continuous-variable displacement $\hat O$ that commutes with the stabilizers. For more general inputs, a weak or regularized summation prescription is required.

\subsection{The Bloch vector of $\hat\rho_L$}\label{sec:Bloch_vec_ssd}
First, we state the following lemma.

\begin{lemma}[Kernel function]\label{lem:kernel}
Define $K(w)\equiv\displaystyle\int_{-\sqrt\pi/2}^{\sqrt\pi/2}e^{iws}ds$. Then, for $k\in\mathbb Z$,
\begin{align}
&K(2k\sqrt\pi)=\sqrt\pi\delta_{k,0}\\
&K\big((2k+1)\sqrt\pi\big)=\frac{2(-1)^k}{(2k+1)\sqrt\pi}.
\end{align}
\end{lemma}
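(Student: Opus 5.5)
The plan is to evaluate the integral in closed form first, and only then specialize $w$ to the two lattice families. For $w\neq0$, the integrand $e^{iws}$ has the elementary antiderivative $e^{iws}/(iw)$. Over the symmetric interval this gives
\begin{equation*}
K(w)=\frac{e^{iw\sqrt\pi/2}-e^{-iw\sqrt\pi/2}}{iw}=\frac{2\sin(w\sqrt\pi/2)}{w},
\end{equation*}
which is the same even expression already used in the convention-equivalence argument of SM~\ref{sec:SSD}. For $w=0$, the integrand is identically $1$, so $K(0)=\sqrt\pi$, the length of the syndrome interval. This value also agrees with the limit $w\to0$ of the closed form.

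Next I substitute the even lattice points $w=2k\sqrt\pi$. The argument of the sine becomes $\sin(k\pi)=0$, so $K(2k\sqrt\pi)=0$ for every $k\neq0$. The case $k=0$ is exactly the $w=0$ computation, giving $\sqrt\pi$. Together these yield $K(2k\sqrt\pi)=\sqrt\pi\,\delta_{k,0}$.

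For the odd lattice points $w=(2k+1)\sqrt\pi$, the sine argument is $(2k+1)\pi/2$, and $\sin\big((2k+1)\pi/2\big)=(-1)^k$ for all $k\in\mathbb Z$. This identity holds for negative $k$ as well, since $\sin(\pi/2+k\pi)=\cos(k\pi)$. Substituting gives
\begin{equation*}
K\big((2k+1)\sqrt\pi\big)=\frac{2(-1)^k}{(2k+1)\sqrt\pi}.
\end{equation*}

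There is no real obstacle in this argument. The only points needing care are treating $k=0$ separately in the even case, where the closed form is $0/0$, and checking the sign $(-1)^k$ for negative $k$.
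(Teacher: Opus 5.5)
Your proposal is correct and follows essentially the same route as the paper: compute $K(0)=\sqrt\pi$ directly, obtain the closed form $K(w)=2\sin(w\sqrt\pi/2)/w$ for $w\neq0$, then evaluate at $w=2k\sqrt\pi$ and $w=(2k+1)\sqrt\pi$. Your explicit checks of the $k=0$ case and of the sign $(-1)^k$ for negative $k$ are slightly more careful than the paper's write-up.
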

\begin{proof}
Direct evaluation of the integral at $w=0$ yields $K(0) = \sqrt{\pi}$. For any non-zero real $w$, the integral evaluates to
\begin{equation}
    K(w) = \left[ \frac{e^{iws}}{iw} \right]_{-\sqrt\pi/2}^{\sqrt\pi/2} = \frac{2\sin(w\sqrt\pi/2)}{w}.
\end{equation}
Consider $w = 2k\sqrt{\pi}$ for an integer $k \neq 0$. The numerator becomes $\sin(k\pi)=0$, yielding $K(2k\sqrt{\pi}) = 0$. Combining this with the $k=0$ case establishes the first identity, 
\begin{equation}
    K(2k\sqrt\pi) = \sqrt\pi\delta_{k,0}.
\end{equation}
For odd multiples of $\sqrt{\pi}$, we substitute $w = (2k+1)\sqrt{\pi}$ into the general formula. The argument of the sine function becomes $(2k+1)\pi/2 = k\pi + \pi/2$. This implies $\sin(k\pi + \pi/2) = \cos(k\pi) = (-1)^k$, giving the second identity,
\begin{equation}
    K\big((2k+1)\sqrt\pi\big) = \frac{2(-1)^k}{(2k+1)\sqrt\pi}.
\end{equation}
\end{proof}
Now consider that for $\hat O=\Wop{u_0}{v_0}$,
\begin{equation}
    \hat O\Sop{m}{n} = \Phi_{m,n}\Wop{u_0+2m\sqrt\pi}{v_0+2n\sqrt\pi}\label{eq:prodOLSmn},
\end{equation}
with $\Phi_{m,n}\equiv e^{-i\sqrt\pi(nu_0-mv_0)}$. This follows from the Composition law~\eqref{eq:Composition} with $(a,b,c,d)=(u_0,v_0,2m\sqrt\pi,2n\sqrt\pi)$: the phase is $e^{-\frac i2(u_0\cdot2n\sqrt\pi-v_0\cdot2m\sqrt\pi)}=e^{-i\sqrt\pi(nu_0-mv_0)}$.

Under these considerations we can state the following result
\begin{proposition}[Expectation value of $\hat O_L$]\label{thm:master}
For a continuous-variable operator $\hat O=\Wop{u_0}{v_0}$ satisfying $[\hat\Pi, \hat O] = 0$, the expectation value of its logical counterpart $\hat O_L = \hat\Pi\hat O\hat\Pi$ is given by
\begin{align}
    \Tr(\hat\rho_L\hat O_L)=\frac1\pi\sum_{m,n=-\infty}^{\infty}&\Phi_{m,n}K(u_0+2m\sqrt\pi)K(v_0+2n\sqrt\pi) \chi_\rho(u_0+2m\sqrt\pi,v_0+2n\sqrt\pi). \label{eq:MasterFormula}
\end{align}
\end{proposition}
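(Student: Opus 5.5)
The plan is to start from Eq.~\eqref{eq:OL_avg}, which already removes one copy of $\Pih$ under the syndrome integrals, and fix the convention $\Dh(\vec s)=\Wop{s_q}{s_p}$ as justified in SM~\ref{sec:SSD}. Insert the stabilizer representation of Proposition~\ref{thm:poisson}, $\Pih=\frac1{\sqrt\pi}\sum_{m,n}\Sop{m}{n}$, so that
\begin{equation*}
\hat O\Pih=\frac1{\sqrt\pi}\sum_{m,n}\Phi_{m,n}\Wop{u_0+2m\sqrt\pi}{v_0+2n\sqrt\pi},
\end{equation*}
by Eq.~\eqref{eq:prodOLSmn}. This turns the integrand into a lattice sum of Weyl operators, each of which transforms by a pure phase under conjugation with the syndrome displacement.

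Next, for each term I will apply the conjugation rule~\eqref{eq:Conjugation}. With $\Dh(-\vec s)^\dagger=\Wop{s_q}{s_p}$ and $(c,d)=(u_0+2m\sqrt\pi,\,v_0+2n\sqrt\pi)$, conjugation gives
\begin{equation*}
\Wop{s_q}{s_p}\Wop{c}{d}\Wop{s_q}{s_p}^{\dagger}=\Wop{c}{d}\,e^{-i(s_qd-s_pc)}.
\end{equation*}
The phase factorizes as $e^{-is_qd}e^{is_pc}$. Taking the trace against $\rhoh$ yields $\chi_\rho(c,d)$ times this phase. Integrating over the square syndrome cell then gives $K(-d)K(c)=K(d)K(c)$, using that $K$ is even (Lemma~\ref{lem:kernel}). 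Collecting prefactors, one $\frac1{\sqrt\pi}$ comes from the SSD map and one from $\Pih$, which produces the overall $\frac1\pi$ and Eq.~\eqref{eq:MasterFormula}. I will also check that the order of $\Dh$ versus $\Dh^\dagger$ in Eq.~\eqref{eq:OL_avg} gives $e^{\pm}$ consistently, though evenness of $K$ makes the sign irrelevant.

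The main obstacle is justifying the interchange of the lattice sum with the syndrome integral and the trace, since $\Pih$ is only distributional and the stabilizer series does not converge as an operator. I would handle this at the level of the final scalar series, under the stated hypothesis of absolute convergence. By Lemma~\ref{lem:kernel}, $|K(w)|\le\min(\sqrt\pi,2/|w|)$, and $|\chi_\rho|\le1$. For the inputs of interest, such as Gaussian $\chi$, $\sum|K K\chi|$ converges, so Fubini--Tonelli applies to $\int d\vec s\sum_{m,n}$.

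To make this rigorous, I would first regularize. One option is to replace $\Pih$ by the approximant $\frac1{\sqrt\pi}\sum_{m,n}e^{-\epsilon(m^2+n^2)}\Sop{m}{n}$, which is a bounded operator, and do the computation with it. The other is to interpret the formal identity weakly, as a pairing with the trace-class $\rhoh$. In either case I then let $\epsilon\to0$ using dominated convergence on the absolutely summable series. Finally, I will note that the commutation assumption $[\Pih,\hat O]=0$ is used only in the step $\Pih\hat O\Pih\to\hat O\Pih^2\to\hat O\Pih$ already taken in Eq.~\eqref{eq:OL_avg}.
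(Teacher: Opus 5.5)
Your proposal is correct and matches the paper's proof step for step. You insert the stabilizer expansion of $\Pih$ into Eq.~\eqref{eq:OL_avg}, use Eq.~\eqref{eq:prodOLSmn} and the conjugation rule to reduce each term to $e^{i(s_pc-s_qd)}\chi_\rho(c,d)$, then integrate to get $K(c)K(d)$ using that $K$ is even. Your extra regularization and dominated-convergence argument for exchanging the lattice sum with the trace and syndrome integral goes beyond the paper, which simply assembles the factors under its stated absolute-convergence assumption.
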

\begin{proof}
Substitute $\Pih=\frac1{\sqrt\pi}\sum_{m,n}\Sop{m}{n}$ defined in Eq~\eqref{eq:Projectors_Stabilizers} into Eq~\eqref{eq:OL_avg}:
\[
\Tr(\hat\rho_L\hat O_L)=\frac1\pi\sum_{m,n}\int\int d\vec{s}\ \Tr\big[\hat\rho\Dh(\vec{s})\big(\hat O\Sop{m}{n}\big)\Dh(-\vec{s})\big],
\]
using $\Dh(-\vec{s})^\dagger=\Dh(\vec{s}) = \Wop{s_q}{s_p}$ (convention (i)). Using Eq~\eqref{eq:prodOLSmn}, $\hat O\Sop{m}{n}=\Phi_{m,n}\Wop{c}{d}$ with $(c,d)=(u_0+2m\sqrt\pi,v_0+2n\sqrt\pi)$. By the conjugation rule~\eqref{eq:Conjugation}, $\Dh(\vec{s})\Wop{c}{d}\Dh(-\vec{s})=\Wop{c}{d}e^{i(s_pc-s_qd)}$, so
\[
\Tr\big[\hat\rho\Dh(\vec{s})\Wop{c}{d}\Dh(-\vec{s})\big]=e^{i(s_pc-s_qd)}\chi_\rho(c,d).
\]
Integrating over the (independent, factorized) domain gives $\int ds_pe^{is_pc}\int ds_qe^{-is_qd}=K(c)K(d)$ by Lemma~\ref{lem:kernel} (using $K$ even for the $s_q$ integral). Assembling all factors gives the claim.
\end{proof}
Eq.~\eqref{eq:MasterFormula} gives a general expression for any logical operator generated by a CV operator that commutes with the projector. In particular we are interested in the  Bloch vector of the resulting state. Therefore, by inserting $\hat O = \hat Z,\hat X,\hat Y$ in Eq.~\eqref{eq:MasterFormula}, we obtain the Bloch vector $\vec R_L =(X,Y,Z)$ of $\hat\rho_L$ with components defined through the following expressions
\begin{align}
    &Z=\langle\ZL\rangle = \frac2\pi\sum_{m=-\infty}^{\infty}\frac{(-1)^m}{2m+1}\chi_\rho\big((2m+1)\sqrt\pi,0\big),\label{eq:Z}\\
    &X=\langle\XL\rangle = \frac2\pi\sum_{m=-\infty}^{\infty}\frac{(-1)^m}{2m+1}\chi_\rho\big(0,(2m+1)\sqrt\pi\big),\label{eq:X}\\
    &Y=\langle\YL\rangle = -\frac4{\pi^2}\sum_{m,n=-\infty}^{\infty}\frac{\chi_\rho\big((2m+1)\sqrt\pi,(2n+1)\sqrt\pi\big)}{(2m+1)(2n+1)}.\label{eq:Y}
\end{align}

\subsection{Gaussian states under the SSD map}\label{sec:Gaussian_Set}
Equations~\eqref{eq:Z},~\eqref{eq:X} and~\eqref{eq:Y} give an analytical expression for the image of all Gaussian states under the SSD map~\eqref{eq:SSDmap}. To see this, consider the characteristic function for Gaussian states given in Eq~\eqref{eq:Gaussian_char}. Expanding the vector notation we get, 
\begin{equation}
    \chi_G(u,v) = \exp\left[ i(uq_0 + vp_0) - \frac{1}{2}\big( V_{qq}u^2 + V_{pp}v^2 + 2V_{qp}uv \big) \right],
    \label{eq:Gaussian_Char_Scalar}
\end{equation}
Substituting into the Equations for the Bloch vector components we have
\begin{align}
    &Z= \frac4\pi\sum_{m=0}^{\infty}\frac{(-1)^m}{2m+1}\cos((2m+1)\sqrt{\pi}q_0)\exp\{-\frac{\pi}{2}(2m+1)^2V_{qq}\},\label{eq:Z_gaussian}\\
    &X = \frac4\pi\sum_{m=0}^{\infty}\frac{(-1)^m}{2m+1}\cos((2m+1)\sqrt{\pi}p_0)\exp\{-\frac{\pi}{2}(2m+1)^2V_{pp}\},\label{eq:X_gaussian}\\
    &Y = -\frac{8}{\pi^2}\sum_{m=0}^\infty\sum_{n=-\infty}^{\infty}\frac{\cos(\sqrt{\pi}[(2m+1)q_0 +(2n+1)p_0])}{(2m+1)(2n+1)}\times\nonumber\\
    &\hspace{2.5cm}\times \exp\{-\frac{\pi}{2}[(2m+1)^2V_{qq}+(2n+1)^2V_{pp} +2V_{pq}(2m+1)(2n+1)]\}.\label{eq:Y_gaussian}
\end{align}
These series converge for finite positive-definite covariance matrices because of the Gaussian factors. Here we show the first-term expressions, mentioned in the main text.  In particular,
\begin{align}
    &Z\simeq \frac4\pi\cos(\sqrt{\pi}q_0)\exp\{-\frac{\pi}{2}V_{qq}\},\label{eq:Z_gaussian_approx}\\
    &X \simeq \frac4\pi\cos(\sqrt{\pi}p_0)\exp\{-\frac{\pi}{2}V_{pp}\},\label{eq:X_gaussian_approx}\\
    &Y \simeq -\frac{8}{\pi^2}\big(\cos(\sqrt{\pi}[q_0 +p_0])\exp\{-\frac{\pi}{2}[V_{qq}+V_{pp}+2V_{pq}]\}\nonumber\\
    &\quad\quad-\cos(\sqrt{\pi}[q_0 -p_0])\exp\{-\frac{\pi}{2}[V_{qq}+V_{pp}-2V_{pq}]\}\big).\label{eq:Y_gaussian_approx}
\end{align}
Note that it corresponds to the $m=0$ term  and $n=0,-1$ terms for $Y$. Now, we are interested in the border of the $3D$ image set of the SSD map inside the Bloch sphere. To approximate it,  we maximize $\|(X,Y,Z)\|_2\leq1$ obtaining an outer envelope. Note that to get the maximum, we need all the cosines to be on their optima. This implies that $q_0,p_0$ must be multiples of $\sqrt{\pi}$ (note that this is true for $Y$ since the two cosines always have the same sign). Therefore, focusing on the positive quadrant $(q_0,p_0)=(0,0)$ (note that the other choices are just reflections of this case) we get the approximated Bloch vector $\vec{R}$:
\begin{equation}
    \vec{R} = \frac{4}{\pi}\big(e^{-\frac{\pi V_{pp}}{2}},\frac{4}{\pi}e^{-\frac{\pi}{2}(V_{qq}+V_{pp})}\sinh(V_{pq}\pi),e^{-\frac{\pi V_{qq}}{2}}\big).\label{eq:extremal_bv}
\end{equation}
For this zero-displacement family, the first-term expressions give
\begin{equation}
    Y \simeq XZ\sinh(V_{qp}\pi),
\end{equation}
and then we use the Schr\"odinger--Robertson uncertainty relation, $V_{qq}V_{pp}\geq1/4+V_{qp}^2$. Since we want the maximum value for the norm, we want to maximize the exponentials. Therefore we take $V_{qp}^2=V_{qq}V_{pp}-1/4$. From the approximated $X$ and $Z$ components we get finally
\begin{equation}
    Y =\pm XZ\sinh\left(\sqrt{4\ln(\frac{\pi X}{4})\ln(\frac{\pi Z}{4})-\frac{\pi^2}{4}}\right)\label{eq:App_image_XYZ}.
\end{equation}
This is a leading-order relation for zero-displacement pure inputs, not an established boundary of the full three-dimensional Gaussian image. Its $Y=0$ locus satisfies the approximated relation:
\begin{equation}
    \ln(\frac{\pi X}{4})\ln(\frac{\pi Z}{4}) =\frac{\pi^2}{16}.\label{eq:approx_ZX}
\end{equation}
The last expression can be obtained directly by considering only the $X$ and $Z$ component of the approximated Bloch vector in Eq.~\eqref{eq:extremal_bv}, which may be compared with the exact characterization shown in the next subsection.

\subsection{Exact characterization of the XZ projection}\label{sec:exact_XZ}

The approximation given in Eq.~\eqref{eq:approx_ZX} was obtained by considering only the first terms of the series in Eqs.~\eqref{eq:Z_gaussian}--\eqref{eq:X_gaussian}. In this subsection, we work with the full series to determine the exact $XZ$ projection and establish its convexity. The central simplification is that $Z$ depends on the covariance matrix only through $V_{qq}$, and $X$ only through $V_{pp}$. At fixed diagonal entries, the off-diagonal entry $V_{qp}$ does not appear in either component. When determining the reachable pairs $(X,Z)$, it enters only through the physicality constraint $V_{qq}V_{pp}-V_{qp}^2\ge\tfrac14$. This allows us to characterize the projection by combining the two quadrature variances with the displacement dependence.

\subsubsection*{Separating the variance and displacement dependence}

The expressions for $X$ and $Z$ have the same functional form. Consequently, we introduce a single function to study how each component depends on its quadrature variance and displacement. For $t>0$ and $\varphi \in \mathbb R$, define
\begin{equation}\label{eq:F_kernel}
    A(t,\varphi) := \frac{4}{\pi}\sum_{m=0}^\infty\frac{(-1)^m}{2m+1}e^{-\tfrac{\pi}{2}(2m+1)^2t}\cos[(2m+1)\varphi],
\end{equation}
so that, directly from Eqs.~\eqref{eq:Z_gaussian}-\eqref{eq:X_gaussian}, $Z=A(V_{qq}, \sqrt{\pi}q_0)$ and $X=A(V_{pp}, \sqrt{\pi}p_0)$. We also define
\begin{equation}\label{Eq:app-function-A}
    A(t):=A(t,0) = \frac{4}{\pi}\sum_{m=0}^\infty\frac{(-1)^m}{2m+1}e^{-\tfrac{\pi}{2}(2m+1)^2t},
\end{equation}
which is simply $A(t,\varphi)$ evaluated at zero displacement. Thus, $A(t)$ gives the exact value of $Z$ (or $X$) for an undisplaced Gaussian state with variance $t$ along the corresponding quadrature. Having isolated the dependence on the variance and displacement in the kernel $A(t,\varphi)$, we first determine the effect of varying the displacement while keeping the variance fixed. The following lemma shows that zero displacement maximizes the magnitude of $A(t,\varphi)$ and, more generally, characterizes its full range at fixed $t$.

\begin{lemma}[Fixed-variance extremality]\label{App:lemm-fixed-variance} For every fixed $t>0$, the function $\varphi \mapsto A(t,\varphi)$ is even, $\pi$-antiperiodic, and strictly decreasing on $(0,\pi)$. Consequently its range over $\varphi\in\mathbb{R}$ is exactly the interval $[-A(t), A(t)]$ with 
\begin{equation}\label{eq:extremal_conditions}
    A(t,\varphi) = A(t) \Longleftrightarrow \varphi \in 2\pi \mathbb{Z} \:\:\:\text{and}\:\:\: |A(t,\varphi)| = A(t) \Longleftrightarrow \varphi \in \pi \mathbb{Z}.
\end{equation}
\end{lemma}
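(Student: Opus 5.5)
Evenness and $\pi$-antiperiodicity are immediate from the series in Eq.~\eqref{eq:F_kernel}, which converges absolutely and uniformly in $\varphi$ for fixed $t>0$. Each term contains $\cos[(2m+1)\varphi]$, which is even in $\varphi$. Under $\varphi\mapsto\varphi+\pi$ it picks up the factor $\cos[(2m+1)\pi]=-1$. Together these reduce the range question to the interval $[0,\pi]$. They also give $A(t,\pi-\varphi)=-A(t,\varphi)$, so $A(t,\pi)=-A(t)$ and $A(t,\pi/2)=0$.

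The core step is strict monotonicity on $(0,\pi)$. I would identify $A(t,\cdot)$ as a heat-kernel convolution of a square wave. The Fourier series of the $2\pi$-periodic square wave $\mathrm{sq}(\varphi)=\operatorname{sgn}(\cos\varphi)$ is $\frac4\pi\sum_m\frac{(-1)^m}{2m+1}\cos[(2m+1)\varphi]$. The factor $e^{-\frac\pi2(2m+1)^2t}$ is exactly the Fourier multiplier of the periodic heat semigroup at time $\tau=\pi t/2$. Hence $A(t,\varphi)=(\vartheta_\tau*\mathrm{sq})(\varphi)$, where $\vartheta_\tau(x)=\frac1{2\pi}\sum_k e^{-k^2\tau}e^{ikx}$ is the periodic heat kernel (a Jacobi theta function). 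This kernel is strictly positive, even, and, by Poisson summation into wrapped Gaussians, strictly decreasing on $(0,\pi)$.

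Differentiating, $\partial_\varphi A=\vartheta_\tau*\mathrm{sq}'$. The distributional derivative of the square wave is $\mathrm{sq}'=-2\delta_{\pi/2}+2\delta_{-\pi/2}$ (mod $2\pi$). This gives
\begin{equation}
\partial_\varphi A(t,\varphi)=-2\big[\vartheta_\tau(\varphi-\tfrac\pi2)-\vartheta_\tau(\varphi+\tfrac\pi2)\big].
\end{equation}
For $\varphi\in(0,\pi)$, the point $\varphi-\pi/2$ is closer to $0$ (mod $2\pi$) than $\varphi+\pi/2$: their circular distances are $|\varphi-\pi/2|$ and $\pi-|\varphi-\pi/2|$ respectively. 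The strict decrease of $\vartheta_\tau$ in circular distance then yields $\partial_\varphi A<0$.

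The main obstacle is proving that $\vartheta_\tau$ is strictly decreasing on $(0,\pi)$. I would first try the Jacobi triple product, $\vartheta(x)\propto\prod_n(1-q^{2n})(1+2q^{2n-1}\cos x+q^{4n-2})$ with $q=e^{-\tau}\in(0,1)$. Every factor is positive and strictly decreasing in $x\in(0,\pi)$, so their product is too. This avoids any delicate term-by-term estimates.

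Once $\partial_\varphi A<0$ on $(0,\pi)$ is established, continuity and the symmetries show that $A(t,\cdot)$ maps $[0,\pi]$ bijectively onto $[-A(t),A(t)]$, with $A(t)>0$. The value $A(t)$ is attained on $[0,\pi]$ only at $\varphi=0$, and $-A(t)$ only at $\varphi=\pi$. By evenness and $2\pi$-periodicity, $A(t,\varphi)=A(t)$ holds exactly when $\varphi\in2\pi\mathbb Z$. By antiperiodicity, $|A(t,\varphi)|=A(t)$ holds exactly when $\varphi\in\pi\mathbb Z$, which gives Eq.~\eqref{eq:extremal_conditions}.
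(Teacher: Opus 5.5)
Your proof is correct, but the key step differs from the paper's. The paper differentiates the series term by term and identifies the derivative directly as $\partial_\varphi A(t,\varphi)=-\frac{2}{\pi}\vartheta_1(\varphi\mid 2it)$. Positivity of $\vartheta_1$ on $(0,\pi)$ is then read off from its product formula, since $\sin\varphi>0$ and each factor $1-2r\cos2\varphi+r^2=|1-re^{2i\varphi}|^2$ is positive.

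You instead write $A(t,\cdot)$ as a $\vartheta_3$-type periodic heat kernel convolved with the square wave $\operatorname{sgn}(\cos\varphi)$. The derivative then becomes $2[\vartheta_\tau(\varphi+\tfrac\pi2)-\vartheta_\tau(\varphi-\tfrac\pi2)]$, and its sign follows from the kernel being strictly decreasing in circular distance, which you obtain from the triple product for $\vartheta_3$. The two computations are the same identity seen from two sides. Expanding your difference in Fourier modes kills the even $k$ and gives $\vartheta_\tau(\varphi-\tfrac\pi2)-\vartheta_\tau(\varphi+\tfrac\pi2)=\frac{1}{\pi}\vartheta_1(\varphi\mid 2it)$.

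The paper's route is shorter: it needs only a sign rather than a monotonicity, and no convolution or distributional derivative. Yours is more conceptual. It presents the lemma as an instance of ``smoothing a square wave with a symmetric, unimodal kernel yields a monotone interpolation between its extremes.'' This is the same square-wave/Gaussian-average picture the paper uses later in the proof of Lemma~\ref{lem:double_exp}, and it makes transparent why no displacement can enlarge $|A|$.

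Two small points need tightening. First, the wrapped-Gaussian (Poisson) picture does not by itself give monotonicity on $(0,\pi)$, since the Gaussians centered at $2\pi k$ with $k\ge1$ increase there. The triple product is what actually carries that step. Second, to pass from strictly decreasing factors to a strictly decreasing infinite product, split off the $n=1$ factor. The tail is then a non-increasing limit of partial products, and it is positive because it is bounded below by $\prod_{n\ge2}(1-q^{2n-1})^2>0$.
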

\begin{proof}
The identities $A(t,-\varphi)=A(t,\varphi)$ and $A(t,\varphi+\pi)=-A(t,\varphi)$ follow directly from the cosine factors in Eq.~\eqref{eq:F_kernel}, since $\cos\big[(2m+1)(\varphi+\pi)\big]=-\cos\big[(2m+1)\varphi\big]$ for every integer $m$. To prove the remaining claim, we determine the sign of the derivative with respect to $\varphi$. The defining series and its differentiated series converge uniformly for $t\geq \epsilon>0$. Term-by-term differentiation gives $\partial_\varphi A(t,\varphi) = -\frac2\pi\vartheta_1(\varphi\mid 2it)$, where $\vartheta_1(u\mid\tau):=2\sum_{m\ge0}(-1)^m e^{i\pi\tau(m+1/2)^2}\sin\big[(2m+1)u\big]$ is the odd Jacobi theta function~\cite{WeissteinJacobiTheta}. Its product representation~\cite[Eq.~20.5.1]{DLMF} reads
\begin{equation}\label{eq:theta_product}
\vartheta_1(\varphi\mid2it) = 2e^{-\pi t/2}\sin\varphi\prod_{n\ge1}\Big[(1-e^{-4\pi nt})\big(1-2e^{-4\pi nt}\cos2\varphi+e^{-8\pi nt}\big)\Big].
\end{equation}
For $0<\varphi<\pi$ the prefactor $\sin\varphi$ is positive. Every factor in the product is also positive, since $1-2r\cos2\varphi+r^2=|1-re^{2i\varphi}|^2>0$ for $0<r<1$. Moreover, $\sum_n e^{-4\pi nt}<\infty$, so the product converges to a strictly positive value. Consequently, $\partial_\varphi\mathcal F(t,\varphi)<0$ on $(0,\pi)$. Thus, $\mathcal F(t,\cdot)$ is strictly decreasing on $(0,\pi)$, from $\mathcal F(t,0)=A(t)$ to $\mathcal F(t,\pi)=-A(t)$, with $\mathcal F(t,\tfrac{\pi}{2})=0$.
\end{proof}
The preceding lemma shows that, at fixed variance, no displacement can produce a magnitude of $Z$ or $X$ larger than the zero-displacement value $A(t)$. Equivalently, the extremal value of $|Z|$ at fixed $V_{qq}$ is $A(V_{qq})$, attained only for $q_0\in\sqrt{\pi}\mathbb Z$, while the extremal value of $|X|$ at fixed $V_{pp}$ is $A(V_{pp})$, attained only for $p_0\in\sqrt{\pi}\mathbb Z$. Thus, once the displacement dependence has been resolved, the remaining task is to understand how the extremal value $A(t)$ varies with the quadrature variance $t$. In particular, we will need $A(t)$ to be monotonic and invertible in order to relate values of $|X|$ and $|Z|$ directly to the corresponding quadrature variances. These properties are established by the following lemma.
\begin{lemma}[Monotonicity of $A$]\label{Lem:app-monotonicity} The function $A(t):(0,\infty)\to(0,1)$ is a continuous, strictly decreasing bijection, with $\lim_{t\to 0^+}A(t)=1$ and $\lim_{t\to\infty}A(t)=0$. 
\end{lemma}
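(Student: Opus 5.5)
Continuity and the limit $t\to\infty$ come straight from the series in Eq.~\eqref{Eq:app-function-A}. For $t\ge\epsilon>0$ the terms are bounded by $\frac4\pi e^{-\frac\pi2(2m+1)^2\epsilon}$, which is summable. So the series converges uniformly on $[\epsilon,\infty)$ and $A$ is continuous on $(0,\infty)$. The same bound gives $|A(t)|\le \frac4\pi\sum_m e^{-\frac\pi2(2m+1)^2 t}\to0$ as $t\to\infty$, by dominated convergence.

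The limit $t\to0^+$ follows from Abel's theorem. The alternating series $\frac4\pi\sum_m\frac{(-1)^m}{2m+1}$ converges to $\frac4\pi\cdot\frac\pi4=1$. Writing $x_m=(2m+1)^2$, the weights $e^{-\frac\pi2 x_m t}$ are monotone in $m$ and bounded by $1$. A summation-by-parts (Abel--Dirichlet) argument then shows that the sum tends to the Leibniz value $1$ as $t\to0^+$. Alternatively, $A(t)=A(t,0)$ can be obtained by integrating the theta-function identity from the proof of Lemma~\ref{App:lemm-fixed-variance}, but the Abel route is more direct.

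The main step is strict monotonicity, and I expect it to be the obstacle, since the series alternates and term-by-term monotonicity is not available. I will use the heat-equation structure. Each term $e^{-\frac\pi2(2m+1)^2t}\cos[(2m+1)\varphi]$ satisfies $\partial_t u=\frac\pi2\partial_\varphi^2u$, so $A(t,\varphi)$ solves the heat equation in $\varphi$ with diffusion $\frac\pi2$. Term-by-term differentiation is justified by uniform convergence for $t\ge\epsilon$. Then
\[
A'(t)=\frac\pi2\,\partial_\varphi^2A(t,\varphi)\big|_{\varphi=0}=-\frac{\pi}{2}\cdot\frac{2}{\pi}\,\partial_\varphi\vartheta_1(\varphi\mid 2it)\big|_{\varphi=0}=-\vartheta_1'(0\mid2it),
\]
using $\partial_\varphi A=-\frac2\pi\vartheta_1(\varphi\mid2it)$ from Lemma~\ref{App:lemm-fixed-variance}. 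Differentiating the product representation~\eqref{eq:theta_product} at $\varphi=0$ gives
\[
\vartheta_1'(0\mid2it)=2e^{-\pi t/2}\prod_{n\ge1}(1-e^{-4\pi nt})^3>0,
\]
since each factor is positive and the product converges. Hence $A'(t)<0$ for all $t>0$, so $A$ is strictly decreasing. As a cross-check, this is consistent with the strict decrease in $\varphi$ proved earlier: the profile $A(t,\cdot)$ has a strict maximum at $0$, so its curvature there is negative.

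Finally, a continuous, strictly decreasing function on $(0,\infty)$ with limits $1$ at $0^+$ and $0$ at $\infty$ maps onto $(0,1)$ bijectively, by the intermediate value theorem. In particular $0<A(t)<1$ for every $t>0$.
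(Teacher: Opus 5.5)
Your proposal is correct and follows the paper's proof: in both, the key step is $A'(t)=-\vartheta_1'(0\mid 2it)=-2e^{-\pi t/2}\prod_{n\ge1}(1-e^{-4\pi nt})^3<0$ from the Jacobi product formula, and your heat-equation route to this derivative simply repackages the paper's direct term-by-term differentiation in $t$. The endpoint limits also agree in substance: the paper uses the $t$-uniform alternating-series remainder bound $\frac{4}{\pi(2N+3)}$ and the estimate $0<A(t)\le\frac4\pi e^{-\pi t/2}$, which play the same role as your Abel-test and dominated-convergence arguments.
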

\begin{proof}
The endpoint limits follow directly from the alternating series in Eq.~\eqref{Eq:app-function-A}. Its term magnitudes decrease with $m$, and the remainder after the term $m=N$ satisfies
\begin{equation}
    \left| A(t)-\frac{4}{\pi}\sum_{m=0}^{N} \frac{(-1)^m}{2m+1} e^{-\frac{\pi}{2}(2m+1)^2t} \right| \le\frac{4}{\pi(2N+3)}.
\end{equation}
Because this bound is independent of $t$, the limit $t\to0^+$ may be taken term by term. The usual series for $\arctan(1)$ then gives $\lim_{t\to0^+}A(t) =\frac{4}{\pi}\sum_{m=0}^{\infty}\frac{(-1)^m}{2m+1} =1$. The alternating-series bound also gives $0<A(t)\le(4/\pi)e^{-\pi t/2}$, and hence $A(t)\to0$ as $t\to\infty$. For $t>0$, term-by-term differentiation is justified by the uniform convergence of the differentiated series for $t\ge\varepsilon>0$. It gives
\begin{equation}
A'(t) =-2\sum_{m=0}^{\infty}(-1)^m(2m+1) e^{-\frac{\pi}{2}(2m+1)^2t} =-\partial_t\vartheta_1(0\mid2it).
\end{equation}
Evaluating the derivative of the product in Eq.~\eqref{eq:theta_product} at zero gives
\begin{equation}
A'(t) =-2e^{-\pi t/2} \prod_{n=1}^{\infty}(1-e^{-4\pi nt})^3 <0.
\label{eq:A_prime}
\end{equation}
Thus, $A$ is continuous and strictly decreasing. Together with the endpoint limits, this proves that it maps $(0,\infty)$ bijectively onto $(0,1)$.
\end{proof}

The two lemmas separate the roles of displacement and variance. Displacement determines where a component lies within its allowed interval, while the variance determines the size of that interval. We now combine these intervals with the uncertainty relation to obtain the full $XZ$ projection.

\subsubsection*{The exact reachable set and its boundary}
Let $\mathcal G_{XZ}$ denote the pairs $(X,Z)$ attainable by Gaussian states with finite covariance matrices. Limit points will be distinguished explicitly below. By Lemma~\ref{Lem:app-monotonicity}, the inverse function $\tau:=A^{-1}:(0,1)\to(0,\infty)$ is well defined and strictly decreasing. In particular, for $0<u<1$, the condition $u\le A(t)$ is equivalent to $t\le\tau(u)$. This is the relation we need to express the uncertainty constraint in terms of the observable components.
\begin{proposition} [Exact $XZ$ projection] \label{App:thm-exact-XZ} The exact reachable set is
\begin{equation}
\mathcal G_{XZ} =\bigcup_{\substack{a,b>0\\ab\ge1/4}} [-A(a),A(a)]\times[-A(b),A(b)].
\label{eq:prop_image}
\end{equation}
This set is convex. Its boundary with $X>0$ and $Z>0$ is the strictly concave curve
\begin{equation}
(X(t),Z(t)) =\left(A(t),A\left(\frac{1}{4t}\right)\right), 
\label{eq:exact_XZ_boundary}
\end{equation}
where $t>0$. A Gaussian state generates this boundary point exactly when its covariance, in $(q,p)$ order, is $V=\operatorname{diag}\left(\frac{1}{4t},t\right)$ and its displacements satisfy $q_0,p_0\in2\sqrt\pi\mathbb Z$. In particular, every such state is pure and has $V_{qp}=0$.
\end{proposition}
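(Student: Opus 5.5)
The plan is to proceed in three stages: first the set identity, then convexity together with the boundary curve, and finally the characterization of the boundary-generating states.

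\emph{Set identity.} Fix a Gaussian state with covariance $V$ and displacement $(q_0,p_0)$. By Lemma~\ref{App:lemm-fixed-variance}, $Z=A(V_{qq},\sqrt\pi q_0)$ ranges over exactly $[-A(V_{qq}),A(V_{qq})]$ as $q_0$ varies. Likewise $X$ ranges over $[-A(V_{pp}),A(V_{pp})]$, and the two displacements are independent. The diagonal entries obey $V_{qq}V_{pp}\ge \det V+V_{qp}^2\ge \tfrac14$. Conversely, any $a,b>0$ with $ab\ge\tfrac14$ is realized by $V=\operatorname{diag}(b,a)$, which has $\det V\ge\tfrac14$. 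Setting $a=V_{pp}$ and $b=V_{qq}$, this gives Eq.~\eqref{eq:prop_image}.

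\emph{Convexity and the boundary curve.} Since $A$ is decreasing (Lemma~\ref{Lem:app-monotonicity}), each rectangle is contained in the one with $b=1/(4a)$. Hence
\[
\mathcal G_{XZ}=\{(X,Z):\ |Z|\le A(1/(4\tau(|X|))),\ 0\le |X|<1\},
\]
with the axis segments handled separately. Writing $g(x)=A\bigl(1/(4\tau(x))\bigr)$, convexity of the set reduces to concavity of $g$ on $(0,1)$, together with $g$ decreasing and the reflection symmetry. The natural route is to parametrize by $t$, with $X=A(t)$ and $Z=A(1/(4t))$. Strict concavity then means that the slope
\[
\frac{dZ}{dX}=\frac{A'(1/(4t))\,(-1/(4t^2))}{A'(t)}
\]
is strictly decreasing in $X$, i.e.\ increasing in $t$. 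Equivalently, $t\mapsto -\log\bigl(dZ/dX\bigr)$ must be monotone.

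Using Eq.~\eqref{eq:A_prime}, $A'(t)=-2e^{-\pi t/2}\prod_n(1-e^{-4\pi n t})^3$, which equals $-2\eta$-type expressions. I would apply the modular transformation of the Dedekind eta function, $\eta(i/s)=\sqrt{s}\,\eta(is)$. With $q$-parameter $e^{-4\pi t}$, the product $e^{-\pi t/2}\prod(1-e^{-4\pi nt})^3=\eta(2it)^3$. Under $t\to 1/(4t)$ we have $2i/(4t)=i/(2t)$, so $\eta(i/(2t))^3=(2t)^{3/2}\eta(2it)^3$. This yields the identity
\[
A'\!\left(\tfrac1{4t}\right)=(2t)^{3/2}A'(t),
\]
and therefore
\[
\frac{dZ}{dX}=-\frac{(2t)^{3/2}}{4t^2}=-\frac{1}{\sqrt{2t}}\,,
\]
up to the exact constant. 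This is strictly increasing in $t$. Since $X$ decreases in $t$, the slope decreases in $X$, which gives strict concavity.

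\emph{Obstacle.} The main obstacle I expect is exactly this step: recognizing and checking the eta modular identity, including constants. A fallback would be Poisson summation applied to $A'(t)$ directly as a theta-type series. Once concavity is established, the union of the reflected hypographs of a concave decreasing function on $[0,1)$ is convex. The axis endpoints $(\pm1,0)$ and $(0,\pm1)$ arise only as limits $t\to0$ or $t\to\infty$, because $A<1$.

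\emph{Boundary states.} A point $(A(t),A(1/(4t)))$ with $X,Z>0$ requires, by strict monotonicity of $A$, that $V_{pp}\le t$ and $V_{qq}\le 1/(4t)$. Combined with $V_{pp}V_{qq}\ge\tfrac14$, both must be equalities. Then $\tfrac14=V_{pp}V_{qq}\ge \det V+V_{qp}^2\ge\tfrac14$ forces $V_{qp}=0$ and $\det V=\tfrac14$, so the state is pure. Finally, the equality $A(t,\sqrt\pi p_0)=A(t)$ holds if and only if $\sqrt\pi p_0\in2\pi\mathbb Z$ by Lemma~\ref{App:lemm-fixed-variance}, i.e.\ $p_0\in2\sqrt\pi\mathbb Z$. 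The same argument applies to $q_0$.
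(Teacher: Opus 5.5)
Your proposal is correct and follows essentially the same route as the paper: the rectangle union from the fixed-variance lemma, the reduction to the graph $|Z|\le A(1/(4\tau(|X|)))$, the Dedekind-eta identity $A'(1/(4t))=(2t)^{3/2}A'(t)$ giving the slope exactly $-1/\sqrt{2t}$ (your constant is right, so no hedging is needed), and the squeeze $V_{pp}\le t$, $V_{qq}\le 1/(4t)$, $V_{pp}V_{qq}\ge\tfrac14$ that identifies the boundary states. The remaining issues are minor. Since the slope is negative, $-\log(dZ/dX)$ should read $-\log(-dZ/dX)$. The convexity of the reflected hypograph deserves the one-line check the paper gives, namely $|\lambda Z_1+(1-\lambda)Z_2|\le B(|\lambda X_1+(1-\lambda)X_2|)$, and the trivial converse for the boundary states should be stated explicitly.
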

\begin{proof}
We first determine the allowed variances and the corresponding rectangles. We then find their outer boundary, establish convexity, and identify the states that attain the boundary.
\begin{enumerate}
    \item \emph{Allowed variances and reachable rectangles}. Write $a=V_{pp}$, $b=V_{qq}$, and $c=V_{qp}$. The covariance must satisfy $a,b>0$ and $ab-c^2=\det V\ge\tfrac14$. Consequently, $ab\ge\tfrac14$. Conversely, any positive $a,b$ satisfying this inequality are realized by the physical diagonal covariance $V=\operatorname{diag}(b,a)$. For these fixed variances, Lemma~\ref{App:lemm-fixed-variance} shows that independently varying $p_0$ and $q_0$ fills the rectangle $[-A(a),A(a)]\times[-A(b),A(b)]$. Taking the union over all allowed variances gives Eq.~\eqref{eq:prop_image}.

    \item \emph{Locating the boundary.} Suppose first that both components are nonzero. Any realization satisfies $|X|\le A(a)$ and $|Z|\le A(b)$, so $a\le\tau(|X|)$ and $b\le\tau(|Z|)$. Combining these inequalities with $ab\ge\tfrac14$ gives a necessary condition. It is also sufficient, because one may choose $a=\tau(|X|)$ and $b=\tau(|Z|)$ and use displacements to select the signs. Thus, for $0<|X|,|Z|<1$,
    \begin{equation}\label{eq:inverse_reachability}
        (X,Z)\in\mathcal G_{XZ} \Longleftrightarrow \tau(|X|)\tau(|Z|)\ge\frac14.
    \end{equation}
    Points on the axes are handled directly by Eq.~\eqref{eq:prop_image}, without evaluating $\tau(0)$. For example, every $(0,Z)$ with $|Z|<1$ belongs to a rectangle obtained by choosing $b>0$ with $A(b)\ge|Z|$ and then setting $a=1/(4b)$. The same reasoning applies to $(X,0)$, and the origin belongs to every rectangle. For a fixed $0<X<1$, Eq.~\eqref{eq:inverse_reachability} gives the largest possible positive value of $Z$ as $B(X):=A\left(\frac{1}{4\tau(X)}\right)$. Setting $t=\tau(X)$ gives Eq.~\eqref{eq:exact_XZ_boundary}. The endpoint limits of $A$ give the continuous extensions $B(0)=1$ and $B(1)=0$. Consequently, the entire finite-covariance set can be written as
    \begin{equation}\label{eq:graph_reachability}
        \mathcal G_{XZ} =\bigl\{(X,Z)\in(-1,1)^2: |Z|\le B(|X|)\bigr\}.
    \end{equation}
    Notice that no assumption of convexity has been used in obtaining this boundary.
    \item \emph{Establishing convexity.} We determine the slope of the boundary by differentiating Eq.~\eqref{eq:exact_XZ_boundary}, which gives $\frac{\d Z}{\d X} =-\frac{A'(1/(4t))}{4t^2A'(t)}$. To simplify this ratio, we use one identity relating the product in Eq.~\eqref{eq:A_prime} at reciprocal arguments. Define the Dedekind eta function on the positive imaginary axis by $\eta(iy):=e^{-\pi y/12} \prod_{n=1}^{\infty}(1-e^{-2\pi ny})$, where $y>0$. Then Eq.~\eqref{eq:A_prime} reads $A'(t)=-2\eta(2it)^3$. The identity $\eta(i/y)=\sqrt{y}\eta(iy)$ for $y>0$~\cite[Secs.~23.17--23.18]{DLMF}, applied with $y=2t$, gives $A'\left(\frac{1}{4t}\right) =(2t)^{3/2}A'(t)$. The slope therefore simplifies to $\frac{\d Z}{\d X}=-\frac{1}{\sqrt{2t}}$. Since $X=A(t)$ decreases with $t$, this slope becomes more negative as $X$ increases. Thus, $B$ is decreasing and strictly concave in $(0,1)$, and its continuous extension is concave on $[0,1]$. To establish convexity of the whole set, take two reachable points $(X_1,Z_1)$ and $(X_2,Z_2)$ and $0\le\lambda\le1$. The triangle inequality, concavity of $B$, and the fact that $B$ is decreasing give
    \begin{equation}
\begin{aligned}
|\lambda Z_1+(1-\lambda)Z_2|
&\le\lambda B(|X_1|)+(1-\lambda)B(|X_2|)\\
&\le B\bigl(\lambda|X_1|+(1-\lambda)|X_2|\bigr)\\
&\le B\bigl(|\lambda X_1+(1-\lambda)X_2|\bigr).
\end{aligned}
\end{equation}
Equation~\eqref{eq:graph_reachability} therefore places every point of the line segment between the two original points inside $\mathcal G_{XZ}$. This proves convexity.

\item \emph{Identifying the boundary states.} Suppose a Gaussian state generates the positive boundary point associated with $t$. The fixed-variance bounds give $A(t)\le A(a)$ and $A\left(\frac{1}{4t}\right)\le A(b)$. Since $A$ is strictly decreasing, $a\le t$ and $b\le1/(4t)$, which imply $ab\le1/4$. Physicality gives the reverse inequality and hence forces $a=t$, $b=1/(4t)$, and $c=0$, with $\det V=\tfrac14$. The covariance is therefore pure and diagonal, with the entries specified in the theorem. Both fixed-variance bounds must be saturated with positive signs. Lemma~\ref{App:lemm-fixed-variance} then requires $p_0,q_0\in2\sqrt\pi\mathbb Z$. Conversely, these covariance and displacement conditions generate the stated point.
\end{enumerate}
\end{proof}
The boundary in the other quadrants follows by changing the signs of $X$ and $Z$. The four axis endpoints $(\pm1,0)$ and $(0,\pm1)$ are limiting points, not outputs of finite-covariance states. Adding these four points gives the closure of $\mathcal G_{XZ}$. The boundary also has a direct squeezing interpretation. Writing $t=\tfrac12e^{2r}$ gives $X(r)=A\left(\tfrac12e^{2r}\right)$ and $Z(r)=A\left(\tfrac12e^{-2r}\right)$, with $\d Z/\d X=-e^{-r}$. Here $r$ is a signed parameter, its sign selects the squeezed quadrature, and $|r|$ is the squeezing strength. We now give two consequences of the theorem and the preceding lemmas. The first concerns the largest attainable value of $|X|+|Z|$.

\begin{corollary}[Maximum of $|X|+|Z|$]\label{cor:vacuum_mt} For Gaussian states with finite covariance,
\begin{equation}\label{eq:vacuum_XZ_maximum}
    \max_{(X,Z)\in\mathcal G_{XZ}}(|X|+|Z|) =2A\left(\tfrac12\right) =1.160314186\ldots.
\end{equation} 
Equality holds exactly when $V=\tfrac12 I_2$ and $q_0,p_0\in\sqrt\pi\mathbb Z$. In particular, the undisplaced vacuum is the unique maximizer among undisplaced Gaussian states.
\end{corollary}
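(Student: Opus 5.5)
By the symmetry of $\mathcal G_{XZ}$ under $X\mapsto -X$ and $Z\mapsto -Z$, I will work in the closed first quadrant and maximize $X+Z$ there. The maximum of a linear functional over the convex set of Proposition~\ref{App:thm-exact-XZ} should sit on the boundary curve $(A(t),A(1/(4t)))$. First I reduce to that curve. Any reachable point with $X,Z\ge0$ satisfies $Z\le B(X)$ by Eq.~\eqref{eq:graph_reachability}, so $X+Z\le X+B(X)$. On $(0,1)$, after setting $X=A(t)$, the function $X+B(X)$ becomes
\[
f(t)=A(t)+A\!\left(\tfrac{1}{4t}\right).
\]
At the endpoints $X\to0$ or $X\to1$ the value tends to $1$. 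Since $2A(\tfrac12)>1$, the supremum is not approached there.

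Next I find the critical point. From the proof of Proposition~\ref{App:thm-exact-XZ}, the slope is $\d Z/\d X=-1/\sqrt{2t}$. Hence $\frac{\d}{\d X}(X+B(X))=1-1/\sqrt{2t}$, which vanishes exactly at $t=\tfrac12$. Because $B$ is strictly concave, $X+B(X)$ is strictly concave in $X$, so this critical point is the unique maximizer. Equivalently, the slope is $-1$ only at $t=\tfrac12$: $B'<-1$ for $X>A(\tfrac12)$ and $B'>-1$ for $X<A(\tfrac12)$. This gives $\max(|X|+|Z|)=2A(\tfrac12)$. I will evaluate the number $1.160314\ldots$ numerically from the rapidly converging series.

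For the equality case, suppose a Gaussian state attains $|X|+|Z|=2A(\tfrac12)$. Then $|Z|=B(|X|)$, since otherwise the strict inequality $Z<B(X)$ would give a smaller sum. By the uniqueness just shown, $|X|=|Z|=A(\tfrac12)$ with $t=\tfrac12$. I then repeat the argument of step 4 of Proposition~\ref{App:thm-exact-XZ}, using absolute values. Lemma~\ref{App:lemm-fixed-variance} gives $|X|\le A(a)$ and $|Z|\le A(b)$. Strict monotonicity (Lemma~\ref{Lem:app-monotonicity}) then gives $a,b\le\tfrac12$, and $ab-c^2\ge\tfrac14$ forces $a=b=\tfrac12$ and $c=0$, that is, $V=\tfrac12 I_2$. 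Saturation of the fixed-variance bounds in absolute value requires $\sqrt\pi q_0,\sqrt\pi p_0\in\pi\mathbb Z$ by Eq.~\eqref{eq:extremal_conditions}, i.e.\ $q_0,p_0\in\sqrt\pi\mathbb Z$. Conversely, all such states attain the value, and restricting to $q_0=p_0=0$ leaves only the vacuum.

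The main obstacle is making the equality case airtight on the absolute-value level. Step 4 of Proposition~\ref{App:thm-exact-XZ} only treated positive-sign saturation and the lattice $2\sqrt\pi\mathbb Z$, so the sign reflections and the $\pi\mathbb Z$ condition of Lemma~\ref{App:lemm-fixed-variance} need care. I must also rule out maximizers on the axes, which the comparison $2A(\tfrac12)>1$ handles.
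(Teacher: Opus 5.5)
Your proposal is correct and follows the paper's route. You reduce by reflection to maximizing $X+B(X)$ on the positive-quadrant boundary, then use the slope $\d Z/\d X=-1/\sqrt{2t}$ and the strict concavity of $B$ to place the unique maximizer at $t=\tfrac12$, and finally invoke the boundary-state characterization of Proposition~\ref{App:thm-exact-XZ}. Your absolute-value rerun of step~4 (giving $a,b\le\tfrac12$, hence $V=\tfrac12 I_2$, plus the $\pi\mathbb Z$ saturation condition of Lemma~\ref{App:lemm-fixed-variance}) and your endpoint comparison $2A(\tfrac12)>1$ just spell out what the paper compresses into ``allowing all four sign choices.''
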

\begin{proof}
By reflection symmetry, it is enough to maximize $X+Z$ in the positive quadrant. On the boundary, the derivative of $X+B(X)$ is $1+B'(X)$. The strict concavity of $B$ therefore places its maximum at the unique point with $B'(X)=-1$. The slope $\frac{dZ}{dX}=-\frac{1}{\sqrt{2t}}$ gives $t=\tfrac12$ and hence the value in Eq.~\eqref{eq:vacuum_XZ_maximum}. The maximizing value $t=\tfrac12$ corresponds to $r=0$ in the squeezing parametrization introduced above. The boundary-state characterization in Proposition~\ref{App:thm-exact-XZ} forces $V=\tfrac12 I_2$. Allowing all four sign choices for $X$ and $Z$ gives $q_0,p_0\in\sqrt\pi\mathbb Z$, as stated.
\end{proof}
Finally, the fixed-variance bounds give a direct criterion for certifying a minimum squeezing strength.
\begin{corollary}[Bound for limited squeezing] \label{cor:bounded_squeezing}
For a Gaussian state with squeezing strength $0\le r\le r_0$,
\begin{equation}
\max(|X|,|Z|) \le A\left(\tfrac12e^{-2r_0}\right).
\label{eq:witness_mt}
\end{equation}
The bound is attained by an undisplaced squeezed vacuum whose smaller quadrature variance is $\tfrac12e^{-2r_0}$. Consequently, within the Gaussian-state model, a value exceeding this bound certifies squeezing greater than $r_0$.
\end{corollary}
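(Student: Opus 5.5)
The plan is to reduce the statement to Lemma~\ref{App:lemm-fixed-variance} and Lemma~\ref{Lem:app-monotonicity} by lower-bounding the diagonal variances of any Gaussian state with bounded squeezing. By Lemma~\ref{App:lemm-fixed-variance}, $|Z|=|A(V_{qq},\sqrt\pi q_0)|\le A(V_{qq})$ and $|X|\le A(V_{pp})$, whatever the displacement. Since $A$ is strictly decreasing (Lemma~\ref{Lem:app-monotonicity}), it suffices to show that both diagonal entries satisfy $V_{qq},V_{pp}\ge \tfrac12 e^{-2r_0}$ whenever the squeezing strength is at most $r_0$.

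For the variance bound we use the parametrization $V=\nu R(\theta)S(2r)R(\theta)^T$. Each diagonal entry is a Rayleigh quotient $e^TVe$ with $e$ a unit vector, so it lies between the smallest and largest eigenvalues of $V$. The smallest eigenvalue is $\nu e^{-2r}$, and $\nu\ge\tfrac12$, $r\le r_0$ give $\nu e^{-2r}\ge\tfrac12 e^{-2r_0}$. Explicitly, $V_{qq}=\nu(\cosh 2r-\cos 2\theta\sinh 2r)\ge \nu e^{-2r}$, and similarly for $V_{pp}$. Monotonicity of $A$ then yields $\max(|X|,|Z|)\le A(\tfrac12 e^{-2r_0})$.

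For attainment, take the pure state with $\theta=0$ and $r=r_0$, so that $V=\operatorname{diag}(\tfrac12e^{-2r_0},\tfrac12e^{2r_0})$ and $q_0=p_0=0$. This gives $Z=A(\tfrac12e^{-2r_0})$ exactly. The $\theta=\pi/2$ rotation gives the same value for $X$. The certification statement is the contrapositive of the bound: if either component exceeds $A(\tfrac12e^{-2r_0})$, then $r>r_0$ within the Gaussian model.

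The only step needing care is making sure that mixedness ($\nu>\tfrac12$) and arbitrary rotation angles cannot reduce a diagonal variance below $\tfrac12e^{-2r_0}$. The eigenvalue argument above settles this cleanly, so I do not expect a genuine obstacle.
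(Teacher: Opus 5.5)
Your proposal is correct and follows the paper's proof essentially step for step. You bound each diagonal variance below by the smaller covariance eigenvalue, $\nu e^{-2r}\ge\tfrac12e^{-2r_0}$, then combine the fixed-variance extremality lemma with the strict decrease of $A$, and attain equality with the undisplaced squeezed vacuum aligned to one quadrature; your explicit check $V_{qq}=\nu(\cosh 2r-\cos 2\theta\sinh 2r)\ge\nu e^{-2r}$ simply spells out the paper's remark that each quadrature variance is at least the smaller eigenvalue.
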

\begin{proof}
Write the covariance eigenvalues as $\nu e^{2r}$ and $\nu e^{-2r}$, where $\nu=\sqrt{\det V}\ge1/2$. Each quadrature variance is at least the smaller eigenvalue. Hence, if $r\le r_0$, $V_{qq}\ge\tfrac12e^{-2r_0}$ and $V_{pp}\ge\tfrac12e^{-2r_0}$. Combining Lemma~\ref{App:lemm-fixed-variance} with the strict
decrease of $A$ gives
\begin{equation}
\max(|X|,|Z|) \le\max\bigl(A(V_{pp}),A(V_{qq})\bigr) \le A\left(\tfrac12e^{-2r_0}\right).
\end{equation}
An undisplaced squeezed vacuum with $V_{qq}=\tfrac12e^{-2r_0}$ and $V_{pp}=\tfrac12e^{2r_0}$ attains equality through its $Z$ component.
\end{proof}

\subsection{Convergence of  $G_r$ to the stabilizer polytope in the $r\to\infty$ limit}\label{sec:proof_thm1}

\begin{proposition} 
[Convergence of the Gaussian image set to the stabilizer polytope]
\label{thm:gaussian_to_stab}
Let $G_r$ denote the set of single-qubit Bloch vectors $\vec{R}=(X,Y,Z)$ obtained from Gaussian states with squeezing parameter $r$, where $X$, $Y$, and $Z$ are given by Eqs.~\eqref{eq:Z_gaussian}, \eqref{eq:X_gaussian}, and \eqref{eq:Y_gaussian}. In the infinite-squeezing limit $r\rightarrow\infty$, the convex hull of the resulting set coincides with the single-qubit stabilizer polytope:   
\begin{equation}
    \operatorname{conv}(G_{\infty})  = \mathcal{P}_{\mathrm{STAB}}  = \operatorname{conv}\left\{(\pm1,0,0), (0,\pm1,0), (0,0,\pm1)\right\}.
\end{equation}
In particular, every limiting Gaussian Bloch vector lies on one of the three coordinate axes and satisfies 
\begin{equation}
    |X|\leq 1,\qquad |Y|\leq 1,\qquad |Z|\leq 1,
\end{equation}
while all six stabilizer states are attained in the limit.
\end{proposition}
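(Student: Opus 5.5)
Our strategy has two parts. First we show that every limit point of Bloch vectors obtained as $r\to\infty$ (at fixed $\theta$, displacement $\boldsymbol d$ and symplectic eigenvalue $\nu$) lies on a coordinate axis inside $[-1,1]$. This gives $\operatorname{conv}(G_\infty)\subseteq\mathcal P_{\rm STAB}$. Then we exhibit explicit families approaching each of the six vertices, which gives the reverse inclusion, since $\mathcal P_{\rm STAB}$ is the convex hull of those vertices.

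For the containment we use $V=\nu R(\theta)S(2r)R(\theta)^T$. This gives $V_{qq}=\nu(\cosh 2r-\cos2\theta\sinh2r)$, $V_{pp}=\nu(\cosh2r+\cos2\theta\sinh2r)$ and $V_{qp}=-\nu\sin2\theta\sinh2r$. The argument splits by cases.
\begin{enumerate}[label=(\alph*)]
\item If $\cos2\theta\neq\pm1$, both $V_{qq}$ and $V_{pp}$ grow like $e^{2r}$. Lemma~\ref{App:lemm-fixed-variance} together with $|A(t,\varphi)|\le A(t)\le\frac4\pi e^{-\pi t/2}$ (Lemma~\ref{Lem:app-monotonicity}) then gives $X,Z\to0$.
\item If $\cos2\theta=-1$ (so $\theta=\pi/2$), we get $V_{pp}=\nu e^{-2r}\to0$ and $V_{qq}\to\infty$. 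Hence $Z\to0$ and $X\to A(0^+,\sqrt\pi p_0)$, which lies in $[-1,1]$. For $Y$, every term carries the factor $e^{-\frac\pi2(2m+1)^2V_{qq}}$, since $V_{qp}=0$ here. The series is dominated by $\frac8{\pi^2}\sum_m e^{-\frac\pi2(2m+1)^2V_{qq}}\sum_n(2n+1)^{-2}\cdot\frac1{2m+1}$, so $Y\to0$.
\item The case $\theta=0$ is symmetric.
\end{enumerate}
The remaining task is to show that in case (a) $Y$ either tends to $0$ or converges to a value with modulus at most $1$. We use a quadratic-form bound on the exponent: with $a=2m+1$ and $b=2n+1$, $a^2V_{qq}+b^2V_{pp}+2abV_{qp}=(a,b)V(a,b)^T\ge\nu e^{-2r}(a\cos\theta'+b\sin\theta')^2$, where $\theta'$ is the appropriate rotated angle. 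Generically this form grows for every odd pair $(a,b)$, and $Y\to0$ by dominated convergence. Since $|Y|\le1$ always holds (the logical state $\Lambda(\hat\rho)$ is a valid qubit state), any limit lies in $[-1,1]$. Finally, a Bloch vector with two vanishing components lies on an axis, and any axis point of modulus at most $1$ lies in $\mathcal P_{\rm STAB}$.

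\textbf{Main obstacle: case (a) for $Y$.} The difficulty is the exceptional directions where the squeezed axis is aligned with an odd lattice vector. The relevant case is $\theta=\pi/4$, where $V_{qq}=V_{pp}=\nu\cosh2r$ and $V_{qp}=\mp\nu\sinh2r$, so the exponent along $(a,b)=(1,\pm1)$ stays bounded. Here I would try to show that the terms with $a=b$ (or $a=-b$) survive, with damping $\exp[-\frac\pi2 a^2\cdot2\nu e^{-2r}]$. All other terms have an exponent of order $(a-b)^2e^{2r}$ and vanish. The surviving sum is then $-\frac8{\pi^2}\sum_{m\ge0}\frac{\cos(\sqrt\pi(2m+1)(q_0\pm p_0))}{(2m+1)^2}e^{-\pi\nu(2m+1)^2e^{-2r}}$. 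This converges to a function of modulus at most $\frac8{\pi^2}\sum_m(2m+1)^{-2}=1$. Equality holds only at $\nu$-independent lattice displacements, with sign $\pm$ fixed by $\theta=\pi/4$ versus $3\pi/4$. In this case $X,Z\to0$, so the limit again lies on the $Y$ axis.

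This computation also supplies the $Y$ vertices: take $\nu=\frac12$, $\theta=\pi/4$ or $3\pi/4$, and $q_0=p_0=0$. The limit $Y\to\mp1$ follows from the Basel-type identity $\sum_{m\ge0}(2m+1)^{-2}=\pi^2/8$, and the error is of order $e^{-2r}$ or $e^{-r}$. For the $X$ and $Z$ vertices, take axis-aligned pure squeezing ($\theta=0$ or $\pi/2$) with displacements in $\{0,\sqrt\pi\}$. Lemma~\ref{Lem:app-monotonicity} gives $A(\tfrac12e^{-2r})\to1$ while the other component tends to $0$ doubly exponentially, with $Y\to0$ as in case (b). This attains $(\pm1,0,0)$ and $(0,0,\pm1)$, and taking convex combinations completes $\operatorname{conv}(G_\infty)=\mathcal P_{\rm STAB}$.
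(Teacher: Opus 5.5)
Your proposal is correct in outline. It uses the same case split by squeezing angle as the paper (axis-aligned angles, odd-slope resonant angles, generic angles), but it handles $Y$ differently.

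\textbf{Comparison with the paper.} The paper replaces every damping factor by its pointwise limit ($0$ or $1$), which turns each component into an indicator function of $\theta$. It then resums the resonant terms $(M,N)=k(m,n)$ and obtains $|Y_\infty|\le 1/|mn|$ at every angle with $\tan\theta=n/m$, where $m,n$ are odd and coprime. So $|Y_\infty|=1$ only at $\theta=\pi/4,3\pi/4$.

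You instead close the containment away from the axis-aligned angles using only two facts. First, $X,Z\to0$, which you derive rigorously from $|A(t,\varphi)|\le A(t)\le\frac{4}{\pi}e^{-\pi t/2}$. Second, the a priori bound $\|\vec R\|_2\le1$ from positivity of $\Lambda(\hat\rho)$. You compute a $Y$ limit only at $\theta=\pi/4,3\pi/4$. There the exchange of limit and sum is easy: the diagonal terms are dominated by $k^{-2}$, and the off-diagonal ones carry $e^{-\frac{\pi}{4}\nu e^{2r}(a-b)^2}$ against the uniformly bounded sums $\sum_a 1/|a(a+d)|$.

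The paper's route gives the full list of limiting $Y$ values at every resonant angle. Your route avoids exchanging $r\to\infty$ with a double series whose weights $1/|MN|$ are not summable; the paper does this termwise with no domination argument. Your route also covers mixed inputs ($\nu>\tfrac12$), whereas the paper's proof uses only the pure-state parametrization.

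\textbf{Case (b) majorant needs repair.} The bound with $\sum_n(2n+1)^{-2}$ is not available. The weights in $Y$ are $1/|(2m+1)(2n+1)|$. Because $V_{pp}=\nu e^{-2r}\to0$, the factor $\sum_n|2n+1|^{-1}e^{-\frac{\pi}{2}(2n+1)^2V_{pp}}$ has no $r$-uniform bound; it grows like $r$. The conclusion $Y\to0$ still holds, by either of two short arguments.
\begin{enumerate}[label=(\roman*)]
\item The $m$-factor is $O\bigl(e^{-\frac{\pi}{2}\nu e^{2r}}\bigr)$, which beats the linear growth in $r$.
\item Pair $n$ with $-n-1$. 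This turns the $n$-sum into $-2\sin\bigl(\sqrt\pi(2m+1)q_0\bigr)\sum_{N>0\ \mathrm{odd}}\frac{\sin(\sqrt\pi Np_0)}{N}e^{-\frac{\pi}{2}N^2V_{pp}}$, a smoothed square wave bounded by $\frac{\pi}{4}$.
\end{enumerate}
Also, whenever $\cos(\sqrt\pi p_0)\neq0$ you have $X\to\pm1$, so $\|\vec R\|_2\le1$ already forces $Y\to0$.

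\textbf{Case (a) bound is wrong but unnecessary.} The inequality $(a,b)V(a,b)^T\ge\nu e^{-2r}(a\cos\theta'+b\sin\theta')^2$ uses the contracting eigenvalue, so its right side decays and proves nothing. The growing part is $\nu e^{2r}(b\cos\theta-a\sin\theta)^2$. Termwise growth also does not give $Y\to0$ by dominated convergence without a uniform majorant. You can simply delete this paragraph: $X,Z\to0$ together with $|Y|\le1$ already places every limit point on the $Y$ segment.

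Finally, the approach to the $Y$ vertex is of order $e^{-r}$, not $e^{-2r}$. This does not affect the proposition.
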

\begin{proof}
   
We prove that at infinite squeezing the convex hull of the Gaussian states is the stabilizer polytope. For this we show that at infinite squeezing the Bloch vector $\vec{R}$ is either zero, or some point on an axis $x,y$ or $z$. Furthermore, the maximum/minimum value any component can take is $1/-1$.

Consider Eq~\eqref{eq:V_parameterized} for $r\to\infty$. The decaying exponentials vanish, and the covariance matrix entries read
\begin{align}
    &V_{qq} = \frac{e^{2r}}{4}(1-\cos(2\theta)),\nonumber\\
    &V_{pp} = \frac{e^{2r}}{4}(1+\cos(2\theta)),\label{eq:V_at_infinite_sq}\\
    &V_{qp} = -\frac{e^{2r}}{2}\sin(2\theta).\nonumber
\end{align}
We examine the exponential factors in Eq~\eqref{eq:Z_gaussian}, Eq~\eqref{eq:X_gaussian} and Eq~\eqref{eq:Y_gaussian},
\begin{align}
    &\exp\{-\frac{\pi}{2}(2m+1)^2V_{qq}\},\label{eq:Z_gaussian_exp}\\
    &\exp\{-\frac{\pi}{2}(2m+1)^2V_{pp}\},\label{eq:X_gaussian_exp}\\\
    &\exp\{-\frac{\pi}{2}[(2m+1)^2V_{qq}+(2n+1)^2V_{pp} +2V_{pq}(2m+1)(2n+1)]\}\label{eq:Y_gaussian_exp}.
\end{align}

\emph{The $Z$ component}. Note that $V_{qq}\to\infty$ when $r\to\infty$ for every angle such that $1-\cos(2\theta)\neq 0$. For the specific case $\theta = 0,\pi$, we have $V_{qq}=0$. This implies that the exponential~\eqref{eq:Z_gaussian_exp} $\exp\{-\frac{\pi}{2}(2m+1)^2V_{qq}\}=1$  for $\theta = 0,\pi$, while for any other angle $\exp\{-\frac{\pi}{2}(2m+1)^2V_{qq}\}\to0$.
Therefore we have 
\begin{equation}
    Z(q_0, \theta)\to\frac4\pi\sum_{m=0}^{\infty}\frac{(-1)^m}{2m+1}\cos((2m+1)\sqrt{\pi}q_0)(\mathbb{I}_{\{\theta = 0\}}+\mathbb{I}_{\{\theta = \pi\}}),
\end{equation}
where $\mathbb{I}$ denotes the indicator function.

\emph{The $X$ component}. With an analogous reasoning, we have that 
\begin{equation}
    X(p_0, \theta)\to\frac{4}{\pi}\sum_{m=0}^{\infty}\frac{(-1)^m}{2m+1}\cos((2m+1)\sqrt{\pi}p_0)\mathbb{I}_{\{\theta = \tfrac{\pi}{2}\}},
\end{equation}
where we have used now that $1+\cos(2\theta)=0$ for $\theta =\tfrac{\pi}{2}$.

\emph{The $Y$ component.}
In this case, we need to evaluate the argument $A =(2m+1)^2V_{qq}+(2n+1)^2V_{pp} +2V_{pq}(2m+1)(2n+1)$. Denote $M=2m+1$ and $N=2n+1$. Substituting ~\eqref{eq:V_at_infinite_sq}  and using $2\sin^2(\theta) = 1-\cos(2\theta)$, $2\cos^2(\theta) = 1+\cos(2\theta)$ and $\sin(2\theta)=2\sin(\theta)\cos(\theta)$ one arrives to 
\begin{equation}
    A = \frac{e^{2r}}{2}(M\sin\theta-N\cos\theta)^2.
\end{equation}
For $r\to\infty$, this expression is $0$ for the values $\tan\theta = N/M$ and diverges otherwise. Therefore we have 
\begin{equation}
Y(q_0,p_0,\theta) = -\frac{8}{\pi^2}\sum_{M>0\\ \text{ odd}}\sum_{N\text{ odd}}\frac{\cos(\sqrt{\pi}[Mq_0 +Np_0])}{MN}\mathbb{I}_{\{\theta = \tan^{-1}(\frac{N}{M})\}}.
\end{equation}
Putting all together, we see that the vector is $\vec{0}$ and it is found in the center of the Bloch sphere or one and only one of its components is different from zero, showing that it must be on an axis for $r\to\infty$. This is true since there exist no $\theta$ giving a true value for two or more of the indicators functions in the expressions above. 

Furthermore, by maximizing and minimizing $X$, $Z$ and $Y$, over $q_0$ and $p_0$ (taking the values $0$ and $\sqrt{\pi}$ for example), we find that 
\begin{equation}
    Z(\theta)\to\pm(\mathbb{I}_{\{\theta = 0\}}+\mathbb{I}_{\{\theta = \pi\}}), \quad X(\theta)\to\pm\mathbb{I}_{\{\theta = \pi/2\}},
\end{equation}
where we have used that $\sum_{m=0}^{\infty}\frac{(-1)^m}{2m+1} =\pi/4$. For $Y$ we have
\begin{equation}
Y(\theta) = \pm\frac{8}{\pi^2}\sum_{M>0\\ \text{ odd}}\sum_{N\text{ odd}}\frac{1}{MN}\mathbb{I}_{\{\theta = \tan^{-1}(\frac{N}{M})\}}.
\end{equation}
To maximize this expression, $Y(\theta)$ is non-vanishing if and only if $\tan\theta$ is the ratio of two odd integers. We can express this ratio in its lowest terms as
\begin{equation}
\tan\theta = \frac{n}{m},
\end{equation}
where $n>0$ and $m\neq 0$ are coprime odd integers ($\gcd(m,n)=1$). Consequently, all pairs $(N,M)$ that satisfy $\mathbb{I}_{\{\theta = \tan^{-1}(\frac{N}{M})\}} =1$ are simply positive odd multiples of this irreducible fraction, uniquely written as
\begin{equation}
M = km, \qquad N = kn,
\end{equation}
where $k$ is a positive odd integer. Substituting this parameterization into the equation collapses the double sum into a single sum over $k$:
\begin{align}
Y(\theta) = \pm\frac{8}{\pi^2} \frac{1}{mn} \sum_{\substack{k>0\\ k\ {\rm odd}}} \frac{1}{k^2}.
\end{align}
Using the known identity  $\sum_{\substack{k>0\\ k\ {\rm odd}}} \frac{1}{k^2} = \frac{\pi^2}{8},$ the prefactors cancel, yielding the expression:
\begin{equation}
Y(\theta) = \pm\frac{1}{mn}.
\end{equation}

To maximize $|Y(\theta)|$, we must minimize the integer denominator $|mn|$. Because $m$ and $n$ are non-zero odd integers, the minimum is $|m| = |n| = 1$. Therefore, the maximum is $\max_{\theta} |Y(\theta)| = 1$ which is attained along 
\begin{equation}
\tan\theta = \pm 1 \implies \theta = \frac{\pi}{4},\frac{3\pi}{4}.
\end{equation}
since $\theta\in[0,\pi]$. 

Putting everything together,  the extremal set of the Bloch vectors for high squeezing is exactly the six discrete stabilizer vertices. Therefore, its convex hull is $\operatorname{conv}(G_\infty) = \mathcal{P}_{\mathrm{STAB}}$ where we have defined $G_{\infty}=\lim_{r\to\infty}G_r$. As a consequence any value $W^G$ for a prepare-and-measure witness $W$ with stabilizer value $W^{\mathrm{STAB}}$ satisfies $W^G =W^{\mathrm{STAB}}$.
\end{proof}
\subsection{Rate of convergence of  $\operatorname{conv}(G_r)$ to $\mathcal P_{\rm STAB}$ for $r\to \infty$}\label{sm:thm_rate}
In this appendix we  show that $\operatorname{conv}(G_r)$ approaches $\mathcal P_{\rm STAB}$ at a rate that is not  uniform across the six stabilizer vertices: the coordinate $X$ and $Z$ vertices are approached doubly-exponentially fast in $r$, while the $Y$ vertices are approached only singly-exponentially fast, and it is this slower rate that controls the overall convergence.

Throughout we fix $q_0=p_0=0$, the extremal displacement established in in the main text, and write $Z(\theta,r)$, $X(\theta,r)$, $Y(\theta,r)$ for the Bloch components as functions of the squeezing angle $\theta$ at fixed squeezing $r$, using the untruncated series for $X$ and $Z$ in Eq.~\eqref{Eq:XYZ-component}.

\begin{lemma}[Double-exponential convergence in $X$ and $Z$]\label{lem:double_exp}
At $\theta=0$,
\begin{equation}
1-Z(0,r) = \frac4\pi e^{-r}\exp\left(-\frac\pi4 e^{2r}\right)\big(1+o(1)\big), \qquad r\to\infty, \label{eq:lemA_result}
\end{equation}
and, by the $q\leftrightarrow p$, $\theta\to\pi/2-\theta$ symmetry of Eqs.~\eqref{Eq:XYZ-component}, the identical rate holds for $1-X(\pi/2,r)$.
\end{lemma}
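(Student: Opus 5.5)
The plan is to avoid summing the slowly convergent alternating series at small variance. Instead I will integrate the derivative formula already obtained in the proof of Lemma~\ref{Lem:app-monotonicity} and apply the modular identity for the Dedekind eta function used in the convexity step of Proposition~\ref{App:thm-exact-XZ}.

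\textbf{Step 1: reduce to $A$ at small variance.} At $\theta=0$ and $q_0=0$, Eq.~\eqref{eq:V_parameterized} gives a pure state with $V_{qq}=\tfrac12 e^{-2r}\equiv t$. Hence $Z(0,r)=A(t)$, with $t\to0^+$ as $r\to\infty$. By Lemma~\ref{Lem:app-monotonicity}, $A(0^+)=1$, so
\begin{equation*}
1-Z(0,r)=-\int_0^t A'(s)\,\d s=2\int_0^t \eta(2is)^3\,\d s ,
\end{equation*}
using $A'(s)=-2\eta(2is)^3$. The integrand is positive and bounded near $0$, so the integral is well defined.

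\textbf{Step 2: move to the convergent regime.} Apply $\eta(i/y)=\sqrt y\,\eta(iy)$ with $y=2s$. This gives
\begin{equation*}
\eta(2is)^3=(2s)^{-3/2}\,\eta\!\left(\tfrac{i}{2s}\right)^3 .
\end{equation*}
For small $s$, the product representation gives
\begin{equation*}
\eta\!\left(\tfrac{i}{2s}\right)^3=e^{-\pi/(8s)}\bigl(1+O(e^{-\pi/s})\bigr),
\end{equation*}
uniformly on $(0,t]$. So the integrand equals $2(2s)^{-3/2}e^{-\pi/(8s)}$ up to a relative error that is $o(1)$ uniformly.

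\textbf{Step 3: Laplace-type asymptotics.} Substitute $u=1/s$. The integral becomes
\begin{equation*}
2^{-1/2}\int_{1/t}^\infty u^{-1/2}e^{-\pi u/8}\,\d u .
\end{equation*}
This is an incomplete Gamma function. Integrating by parts once gives
\begin{equation*}
\frac{8}{\pi}\,2^{-1/2}\,t^{1/2}\,e^{-\pi/(8t)}\bigl(1+O(t)\bigr).
\end{equation*}
Now substitute $t=\tfrac12e^{-2r}$, so that $\sqrt{t/2}=\tfrac12e^{-r}$ and $\pi/(8t)=\tfrac\pi4e^{2r}$. This yields $\tfrac4\pi e^{-r}\exp(-\tfrac\pi4e^{2r})(1+o(1))$, which is Eq.~\eqref{eq:lemA_result}.

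\textbf{Step 4: the $X$ vertex.} At $\theta=\pi/2$ and $p_0=0$, Eq.~\eqref{eq:V_parameterized} gives $V_{pp}=\tfrac12e^{-2r}$. Since $X=A(V_{pp},\sqrt\pi p_0)$ has the same form as $Z$, the same computation applies verbatim to $1-X(\pi/2,r)$.

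\textbf{Main obstacle.} The step needing the most care is controlling the error terms uniformly. Two facts are required. First, the correction factor $\eta(i/(2s))^3e^{\pi/(8s)}-1$ is uniformly small on $(0,t]$; this holds because it is monotone in $s$ and tends to $0$. Second, the incomplete-Gamma remainder is $O(t)$ relative to the leading term. Both are elementary.
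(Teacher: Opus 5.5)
Your proof is correct, and it takes a genuinely different route from the paper's.

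The paper works with $A$ itself. It writes each $e^{-k^2t}$ as a Gaussian Fourier integral and resums the series into the square wave $\frac{\pi}{4}\operatorname{sgn}(\cos x)$. This expresses $1-Z$ exactly as twice the Gaussian probability that $\cos(\sqrt\pi\,q)<0$. The leading term is then the Gaussian tail beyond $x=\pi/2$, and the contributions of the later sign-change intervals are bounded by the tail beyond $3\pi/2$.

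You instead integrate the closed form $A'(s)=-2\eta(2is)^3$ up from $s=0$. Applying $\eta(i/y)=\sqrt y\,\eta(iy)$ reduces everything to an incomplete Gamma function. All your constants check out, including $\sqrt{t/2}=\tfrac12e^{-r}$ and $\pi/(8t)=\tfrac\pi4e^{2r}$ with your $t=V_{qq}$.

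\textbf{What your route buys.} It recycles identities the paper already uses in Lemma~\ref{Lem:app-monotonicity} and in the convexity step of Proposition~\ref{App:thm-exact-XZ}. It involves only absolutely convergent products and series. Its error control is explicit: an exponentially small, monotone product correction plus an $O(t)$ incomplete-Gamma remainder. It also sidesteps two delicate points in the paper's argument: interchanging the conditionally convergent square-wave Fourier series with the integral, and the bookkeeping of the further intervals. One small point: boundedness of the integrand near $0$ only becomes visible after your Step~2. You do not actually need it, since $A$ is monotone with $A(0^+)=1$.

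\textbf{What the paper's route buys.} It needs no modular-form input, only the Gaussian Fourier transform and the $\arctan$ square-wave identity. It also makes the origin of the rate transparent. To leading order, $1-Z$ is four times the probability that the position quadrature crosses the decoding boundary $\sqrt\pi/2$. That logical-error picture is exactly why the approach to the $Z$ and $X$ vertices is doubly exponential in $r$.
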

\begin{proof}
At $\theta=0$, $V_{qq}=\tfrac12(\cosh2r-\sinh2r)=\tfrac12e^{-2r}$, so
\begin{equation}
Z(0,r)=\frac4\pi F(t), \qquad \text{ with } F(t):=\sum_{m\ge0}\frac{(-1)^m}{2m+1}e^{-(2m+1)^2t}, \text{ and } t:=\frac\pi4e^{-2r}. \label{eq:F_def}
\end{equation}
Using the Fourier transform of a Gaussian $e^{-k^2t}=\frac1{2\sqrt{\pi t}}\int_{-\infty}^\infty e^{-x^2/4t}e^{ikx}dx$ and summing over $m$ before integrating (using dominant convergence of the series) we get,
\begin{equation}
F(t)=\frac1{2\sqrt{\pi t}}\int_{-\infty}^\infty e^{-x^2/4t}C(x)dx, \qquad \text{where }C(x)=\sum_{m\ge0}\frac{(-1)^m}{2m+1}\cos\big((2m+1)x\big), \label{eq:heat_kernel}
\end{equation}
since the odd sine part of the sum integrates to zero with the even Gaussian weight. 

Now, from the Maclaurin series $\arctan z = \sum_{m\ge0}\frac{(-1)^m}{2m+1}z^{2m+1}$, evaluated at $z=e^{ix}$, and using the identity $\tan(2u) = 2\text{Re}(z)/(1-|z|^2)$, where $z=e^{ix}$ and $u=Re(\arctan(z))$, we obtain
\begin{equation}
C(x)=\text{Re}(\arctan(e^{ix})) =\frac\pi4\operatorname{sgn}(\cos(x)), \label{eq:squarewave}
\end{equation}
resulting, in a $\pm\pi/4$ square wave of period $2\pi$ ($C(x)=\frac\pi4 \ \text{for}\ |x|<\frac\pi2,\quad C(x+\pi)=-C(x)$). Notice now that given that $\frac{1}{2\sqrt{\pi t}}\int_{-\infty}^{\infty}dx\exp(-x^2/4t) = 1$, then we can write
\begin{equation}
    \frac{4}{\pi}F(t) =1-\frac{4}{2\sqrt{\pi t}}\int_{x>0,\\cos(x)<0}e^{-x^2/4t}dx,
\end{equation}
where the coefficient $4$ accounts for the correction of assuming $\operatorname{sgn}(\cos(x))=1$ for all $x$ and for integrating only over the positive values of $x$.

To evaluate this limit as $t \to 0$ ($r \to \infty$), we extend the first integration interval to infinity to apply  asymptotic expansions for the Gaussian tail. Therefore, we rewrite the integral over the first negative region as:
\begin{equation}
    \int_{\pi/2}^{3\pi/2} e^{-x^2/4t} dx = \int_{\pi/2}^{\infty} e^{-x^2/4t} dx - \int_{3\pi/2}^{\infty} e^{-x^2/4t} dx.
\end{equation}
The remaining regions where $\cos(x) < 0$ occur at intervals $(\frac{5\pi}{2}, \frac{7\pi}{2}), (\frac{9\pi}{2}, \frac{11\pi}{2})$, and so forth. We can bound the sum of the subtracted tail at $3\pi/2$ and all subsequent positive contributions. Since the integrand is positive, the absolute value of this remainder is less than the integral over $[3\pi/2,\infty)$:
\begin{equation}
    \left| -\int_{3\pi/2}^{\infty} e^{-x^2/4t} dx + \sum_{k=1}^{\infty} \int_{\frac{(4k+1)\pi}{2}}^{\frac{(4k+3)\pi}{2}} e^{-x^2/4t} dx \right| < \int_{3\pi/2}^{\infty} e^{-x^2/4t} dx.
\end{equation}
Applying the standard upper bound for the Gaussian tail $\int_z^\infty e^{-x^2/4t} dx < \frac{2t}{z} e^{-z^2/4t}$ at $z = 3\pi/2$, shows that the remainder is bounded by $\frac{4t}{3\pi} e^{-9\pi^2/16t}$. Multiplying by the prefactor $\frac{2}{\sqrt{\pi t}}$, this total remainder yields an error of order $O\left(\sqrt{t}  e^{-9\pi^2/16t}\right)$. Therefore, 
\begin{equation}
    \frac{4}{\pi}F(t) = 1 - \frac{2}{\sqrt{\pi t}} \int_{\pi/2}^{\infty} e^{-x^2/4t} dx + O\left(\sqrt{t}  e^{-9\pi^2/16t}\right).
\end{equation}
Applying once more the standard asymptotic expansion for the Gaussian tail, $\int_z^\infty e^{-x^2/4t} dx = \frac{2t}{z} e^{-z^2/4t} \big(1 + o(1)\big)$, gives
\begin{equation}
    \frac{2}{\sqrt{\pi t}} \int_{\pi/2}^{\infty} e^{-x^2/4t} dx = \frac{2}{\sqrt{\pi t}} \left( \frac{4t}{\pi} e^{-\pi^2/16t} \right) \big(1+o(1)\big) = \frac{8}{\pi^{3/2}} \sqrt{t}  e^{-\pi^2/16t} \big(1+o(1)\big).
\end{equation}
Substituting this back into $F(t)$, the error term $O\big(e^{-9\pi^2/16t}\big)$ is much smaller than this leading term as $t \to 0$, allowing us to absorb it in the fast decaying terms $o(1)$ and get:
\begin{equation}
    1 - \frac{4}{\pi}F(t) = \frac{8}{\pi^{3/2}} \sqrt{t}  e^{-\pi^2/16t} \big(1 + o(1)\big).
\end{equation}
Finally, we substitute the squeezing parameter $t = \frac{\pi}{4} e^{-2r}$, ($\sqrt{t} = \frac{\sqrt{\pi}}{2} e^{-r}$ and $\pi^2/(16t) = \frac{\pi}{4} e^{2r}$), obtaining
\begin{equation}
    1 - Z = \frac{4}{\pi} e^{-r} \exp\left(-\frac{\pi}{4} e^{2r}\right) \big(1+o(1)\big),
\end{equation}
for $r \to \infty$. We obtain the same expression for $X$ by the symmetry $q_0\to p_0$ and $\theta\to\pi-\theta$.
\end{proof}
Now we move to the $Y$ direction.
\begin{lemma}[Single-exponential convergence in $Y$]\label{lem:single_exp}
At $\theta=\pi/4$,
\begin{equation}
1+Y\left(\frac\pi4,r\right) = \frac{2\sqrt2}{\pi}e^{-r}\big(1+o(1)\big), \qquad r\to\infty. \label{eq:lemB_result}
\end{equation}
\end{lemma}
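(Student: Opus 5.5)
The plan is to specialise the exact series for $Y$ in Eq.~\eqref{eq:Y_gaussian} to $q_0=p_0=0$ and $\theta=\pi/4$, split it into a diagonal and an off-diagonal part, and show that the diagonal part produces the stated $e^{-r}$ correction while the off-diagonal part is doubly exponentially small.

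\emph{Reducing the exponent.} From Eq.~\eqref{eq:V_parameterized} at $\theta=\pi/4$ we have $V_{qq}=V_{pp}=\tfrac12\cosh 2r$ and $V_{qp}=-\tfrac12\sinh2r$. Write $M=2m+1$ and $N=2n+1$, and symmetrise the sum over $m\ge0$ back to all odd $M$, which is allowed by the evenness of the summand. Expanding $\cosh$ and $\sinh$ in exponentials, the Gaussian exponent becomes
\begin{equation*}
\frac{\pi}{2}\Bigl[\tfrac12(M^2+N^2)\cosh 2r-MN\sinh 2r\Bigr]=\frac{\pi}{8}\Bigl[(M-N)^2e^{2r}+(M+N)^2e^{-2r}\Bigr].
\end{equation*}
This yields
\begin{equation*}
Y=-\frac{4}{\pi^2}\sum_{M,N\ \mathrm{odd}}\frac{1}{MN}\,e^{-\frac{\pi}{8}[(M-N)^2e^{2r}+(M+N)^2e^{-2r}]}.
\end{equation*}

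\emph{Off-diagonal terms.} For $M\neq N$ we have $|M-N|\ge2$, so each term is bounded by $e^{-\frac{\pi}{2}e^{2r}}e^{-\frac{\pi}{8}((M-N)^2-4)e^{2r}}/|MN|$. I will bound the double sum by grouping terms according to $d=M-N$. The factor $\sum_{d}e^{-\frac\pi8(d^2-4)e^{2r}}$ is $O(1)$, and the remaining sum $\sum_{N}1/|N(N+d)|$ is at most $\pi^2/2$ by Cauchy--Schwarz. The off-diagonal total is therefore $O(e^{-\frac{\pi}{2}e^{2r}})$, which is $o(e^{-r})$.

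\emph{Diagonal terms.} For $M=N=k$, with $s:=\tfrac{\pi}{2}e^{-2r}$, the diagonal contribution is $-\frac{8}{\pi^2}D(s)$, where
\begin{equation*}
D(s):=\sum_{k>0,\ k\ \mathrm{odd}}\frac{e^{-sk^2}}{k^2}.
\end{equation*}
Since $D(0)=\pi^2/8$, I write
\begin{equation*}
1-\frac{8}{\pi^2}D(s)=\frac{8}{\pi^2}\int_0^s\Theta(\sigma)\,d\sigma,\qquad \Theta(\sigma):=\sum_{k>0\ \mathrm{odd}}e^{-\sigma k^2}.
\end{equation*}
Poisson summation (equivalently, the Jacobi transformation of $\vartheta_2$) gives the exact identity
\begin{equation*}
\Theta(\sigma)=\frac{\sqrt\pi}{4\sqrt\sigma}\sum_{j\in\mathbb Z}(-1)^je^{-\pi^2j^2/(4\sigma)}=\frac{\sqrt\pi}{4\sqrt\sigma}\bigl(1+O(e^{-\pi^2/(4\sigma)})\bigr).
\end{equation*}
Integrating, $\int_0^s\Theta\,d\sigma=\tfrac{\sqrt\pi}{2}\sqrt s\,(1+o(1))$, and hence
\begin{equation*}
1-\tfrac{8}{\pi^2}D(s)=\frac{4}{\pi^{3/2}}\sqrt s\,(1+o(1)).
\end{equation*}
Substituting $\sqrt s=\sqrt{\pi/2}\,e^{-r}$ gives $\frac{2\sqrt2}{\pi}e^{-r}(1+o(1))$. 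Adding the off-diagonal estimate then gives $1+Y(\pi/4,r)=\frac{2\sqrt2}{\pi}e^{-r}(1+o(1))$.

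\emph{Main obstacle.} The delicate step is justifying the small-$s$ asymptotics of $\Theta$, together with the termwise integration and the interchange of sums. Using the exact Poisson identity avoids any Euler--Maclaurin error analysis, and all interchanges are justified by absolute convergence for $s>0$ and monotone convergence as $s\to0$. The off-diagonal bound needs care only to make sure the harmonic-type double sum is uniformly bounded, which the Cauchy--Schwarz estimate supplies.
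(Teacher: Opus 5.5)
Your proof is correct and follows essentially the same route as the paper's. Both arguments split off the doubly-exponentially suppressed $M\neq N$ terms, write the diagonal odd-$k$ sum as $\pi^2/8$ minus $\int_0^{s}\sum_{k\ \mathrm{odd}}e^{-k^2\sigma}\,d\sigma$, and evaluate that integral by Poisson summation to obtain $\frac{2\sqrt2}{\pi}e^{-r}$. The paper only asserts that the off-diagonal sum is subdominant, whereas your exponent $\frac{\pi}{8}[(M-N)^2e^{2r}+(M+N)^2e^{-2r}]$ and your grouped Cauchy--Schwarz bound make that step rigorous. This tightens the paper's argument rather than replacing it.
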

\begin{proof}
    At $\theta=\pi/4$: $V_{qq}=V_{pp}=\tfrac12\cosh2r$, $V_{qp}=-\tfrac12\sinh2r$. Defining $M=2m+1$, $N=2n+1$,  the exponent of $\varphi_{m,n}$ in the $Y$ component appearing in Eq.~\eqref{Eq:XYZ-component} together with $\cosh2r-\sinh2r=e^{-2r}$ reads
\begin{equation}
M^2V_{qq}+N^2V_{pp}+2MNV_{qp} = \frac12(M-N)^2\cosh2r + MNe^{-2r}. \label{eq:offdiag}
\end{equation}
Off-diagonal terms ($M\ne N$) are therefore suppressed by $\exp\big(-\tfrac\pi4(M-N)^2\cosh2r\big)=O\big(e^{-e^{2r}}\big)$, doubly-exponentially smaller in $r$ than any single-exponential term, and their sum over $M\ne N$ remains subdominant.

The diagonal terms $M=N=k$ ($k$ odd, $k\ge1$) give an exponent $-\tfrac\pi2k^2e^{-2r}$, so
\begin{equation}
-Y\left(\frac\pi4,r\right) = \frac8{\pi^2}S(t'), \qquad S(t'):=\sum_{k\ \mathrm{odd}\ge1}\frac1{k^2}e^{-k^2t'}, \qquad t':=\frac\pi2e^{-2r}, \label{eq:S_def}
\end{equation}
up to the doubly-exponentially small correction from Eq.~\eqref{eq:offdiag}. Using $e^{-k^2u}/k^2=\int_u^\infty e^{-k^2v}dv$ and the classical Poisson-summation transform $\sum_{k\ge1}e^{-k^2u}=\tfrac12\sqrt{\pi/u}-\tfrac12+O\big(u^{-1/2}e^{-\pi^2/u}\big)$ as $u\to0^+$ so,
\begin{align}
    \sum_{k\ \mathrm{odd}\ge1}e^{-k^2u} &= \sum_{k\ge1}e^{-k^2u} -\sum_{k\ \mathrm{even}\ge1}e^{-k^2u} \nonumber, \\&= \tfrac12\sqrt{\pi/u}-\tfrac12+O\big(u^{-1/2}e^{-\pi^2/u}\big) - \{\tfrac12\sqrt{\pi/4u}-\tfrac12+O\big((4u)^{-1/2}e^{-\pi^2/4u}\big)\},\\
    &=\frac{1}{4}\sqrt{\pi/u} + O\big(u^{-1/2}e^{-c/u}\big).\label{eq:poisson_approx}
\end{align}
To evaluate the resulting integral we split the integration domain:
\begin{align}
    S(t') &= \int_{t'}^\infty \left( \sum_{k\ \mathrm{odd}\ge 1} e^{-k^2 v} \right) dv = \int_{0}^\infty \left( \sum_{k\ \mathrm{odd}\ge 1} e^{-k^2 v} \right) dv - \int_{0}^{t'} \left( \sum_{k\ \mathrm{odd}\ge 1} e^{-k^2 v} \right) dv.
\end{align}
The first integral corresponds to the evaluation at $t'=0$, which is a known convergent series:
\begin{equation}
    S(0) = \int_{0}^\infty \left( \sum_{k\ \mathrm{odd}\ge 1} e^{-k^2 v} \right) dv = \sum_{k\ \mathrm{odd}\ge 1} \frac{1}{k^2} = \frac{\pi^2}{8}.
\end{equation}
For the second integral, since $t' \to 0$ in the infinite-squeezing limit, we substitute the leading-order Poisson summation approximation (Eq.~\eqref{eq:poisson_approx}), $\sum_{k\ \mathrm{odd}} e^{-k^2 v} \approx \frac{1}{4}\sqrt{\pi/v}$:
\begin{equation}
    \int_{0}^{t'} \frac{1}{4}\sqrt{\frac{\pi}{v}}  dv = \frac{\sqrt{\pi}}{4} \Big[ 2\sqrt{v} \Big]_0^{t'} = \frac{\sqrt{\pi}}{2} \sqrt{t'}.
\end{equation}
Subtracting this integration from the exact constant $S(0)$ yields the final asymptotic behavior:
\begin{equation}
    S(t') = \frac{\pi^2}{8} - \frac{\sqrt{\pi}}{2}\sqrt{t'} + O\left(\sqrt{t'}e^{-c/t'}\right).
\end{equation}
Substituting into Eq.~\eqref{eq:S_def} and using $\sqrt{t'}=\sqrt{\pi/2}e^{-r}$,
\begin{equation}
-Y\left(\frac\pi4,r\right) = 1 - \frac4{\pi^{3/2}}\sqrt{t'} + \cdots = 1-\frac{2\sqrt2}\pi e^{-r}+\cdots,
\end{equation}
which rearranges to Eq.~\eqref{eq:lemB_result}.
\end{proof}
Finally we show that by convexity, the whole set must converge at the same rate than the extremal points in $X,Y$ and $Z$.
\begin{lemma}[Distance from a polytope with near-attained vertices]\label{lem:convexity}
Let $P=\operatorname{conv}\{\vec{v}_1,\ldots,\vec{v}_k\}$ and let $S$ be any set containing points $\vec{u}_1,\ldots,\vec{u}_k$ with $\|\vec{u}_i-\vec{v}_i\|\le\delta$ for each $i$. Then
\begin{equation}
\sup_{\vec{w}\in P} d\big(\vec{w},\operatorname{conv}(S)\big) \le \delta,
\end{equation}
where $d(\vec w,S) = \inf_{\vec v\in S}\|\vec w-\vec v\|_2$.
\end{lemma}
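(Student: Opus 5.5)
The plan is to realize each point of $P$ by the same convex weights applied to the approximating points $\vec u_i$, and then bound the error with the triangle inequality. Fix $\vec w\in P$. Since $P=\operatorname{conv}\{\vec v_1,\ldots,\vec v_k\}$, I will write $\vec w=\sum_{i=1}^k\lambda_i\vec v_i$ with $\lambda_i\ge0$ and $\sum_i\lambda_i=1$. Using the same weights, I will define the candidate point
\begin{equation*}
\vec w' := \sum_{i=1}^k\lambda_i\vec u_i .
\end{equation*}
Because each $\vec u_i\in S$ and the weights are convex, $\vec w'$ belongs to $\operatorname{conv}(S)$. Hence $d(\vec w,\operatorname{conv}(S))\le\|\vec w-\vec w'\|_2$, by the definition of $d$ as an infimum.

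The second step is the estimate itself. Linearity of the combination gives $\vec w-\vec w'=\sum_i\lambda_i(\vec v_i-\vec u_i)$. The triangle inequality and homogeneity of the Euclidean norm, with $\lambda_i\ge0$, then yield
\begin{equation*}
\|\vec w-\vec w'\|_2\le\sum_i\lambda_i\|\vec v_i-\vec u_i\|_2\le\delta\sum_i\lambda_i=\delta .
\end{equation*}
This bound does not depend on $\vec w$. Taking the supremum over $\vec w\in P$ therefore gives the claim.

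There is no genuine obstacle; the only points to check are that convex combinations of elements of $S$ lie in $\operatorname{conv}(S)$ and that the weights sum to one. I would also note how the lemma is used afterwards. Apply it with $S=G_r$, with $\vec v_i$ the six stabilizer vertices, and with $\vec u_i$ the explicit families supplied by Lemmas~\ref{lem:double_exp} and~\ref{lem:single_exp}, reflected by lattice displacements where needed. Then $\delta$ is the largest of the vertex errors. This is the $Y$-vertex error $\frac{2\sqrt2}{\pi}e^{-r}(1+o(1))$, provided the transverse components of the $\theta=\pi/4$ family are also shown to be of the same or smaller order. That check belongs to the application, not to this lemma.
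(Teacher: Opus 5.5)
Your proof is correct and is essentially the paper's own argument: reuse the convex weights of $\vec w$ on the points $\vec u_i$ to obtain $\vec w'\in\operatorname{conv}(S)$, then bound $\|\vec w-\vec w'\|_2\le\sum_i\lambda_i\|\vec v_i-\vec u_i\|_2\le\delta$ uniformly in $\vec w$. Your side remark about the application is apt, and it is easily settled: at $\theta=\pi/4$ with zero displacement one has $0<X=Z=A(\tfrac12\cosh 2r)\le\frac4\pi e^{-\frac\pi4\cosh 2r}$, which is doubly exponentially small, so the $Y$-vertex distance is indeed $\frac{2\sqrt2}{\pi}e^{-r}(1+o(1))$.
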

\begin{proof}
Any $\vec{w}\in P$ is $\vec{w}=\sum_i\lambda_i\vec{v}_i$ with $\lambda_i\ge0$, $\sum_i\lambda_i=1$. Set $\vec{w}'=\sum_i\lambda_i\vec{u}_i\in\operatorname{conv}(S)$; then $\|\vec{w}-\vec{w}'\|=\big\|\sum_i\lambda_i(\vec{v}_i-\vec{u}_i)\big\|\le\sum_i\lambda_i\|\vec{v}_i-\vec{u}_i\|\le\delta$.
\end{proof}
Therefore, for large $r$, Lemmas~\ref{lem:double_exp} and \ref{lem:single_exp} show  for each of the six vertices of $\mathcal P_{\mathrm{STAB}}$, a point of $G_r$ within Euclidean distance $O(e^{-r})$ or $O\big(e^{-r}\exp(-\tfrac\pi4e^{2r})\big)$, the former always dominating. In particular,  $\delta(r)=\tfrac{2\sqrt2}\pi e^{-r}(1+o(1))$ from the $Y$-vertices bounds all six distances at the stated order. Applying Lemma~\ref{lem:convexity} with $\delta=\delta(r)$ gives  the convergence rate stated in Eq.~\eqref{eq:thm_rate}.

\subsection{Set-magic and squeezing threshold}\label{sec: set magic}

Figure ~\hyperref[F-certification]{\ref{F-certification}(a)} shows that we can violate the $S_3^{\mathrm{STAB}}(t)$ for some $t$ and can hence certify set-magic. Here we make this observation exact by showing that for every $0<t<1$, there exist three pure Gaussian inputs with the same finite squeezing magnitude $r$ whose SSD outputs satisfy $S_3(t)>S_3^{\mathrm{STAB}}(t)$ and hence certify set magic. We write the bound as $S_3^{\mathrm{STAB}}(t)=\max_{1\le j\le3}C_j(t),$ where $(C_1,C_2,C_3)=\bigl(4-2t,\;2\sqrt5\,t+2\sqrt2(1-t),\;6t\bigr)$. Following the definitions introduced in  Sec.~\ref{sec:exact_XZ},  define $\alpha(r)=A(e^{-2r}/2)$ and $\beta(r)=A(e^{2r}/2)$. Using the $\vartheta_1(u\mid\tau)$ with $\vartheta_1'$ denoting the first-argument derivative give us the two identities, for $x,y>0$: $  -\partial_x A(x,0)=\vartheta_1'(0\mid2ix),$ and $\vartheta_1'(0\mid i/y)=y^{3/2}\vartheta_1'(0\mid iy)$. These follow from termwise differentiation and Jacobi's product and imaginary-transformation identities~\cite{DLMF}. Consequently $\alpha'(r)=-e^{-r}\beta'(r)=e^{-2r}\vartheta_1'(0\mid ie^{-2r})>0$ which on integration gives:
\begin{equation}
    0<1-\alpha(r) =\int_r^\infty e^{-s}[-\beta'(s)]\,ds <e^{-r}\beta(r).
\end{equation}
by identifying $A(0)=1$ and $A(\infty)=0$ using appropriate limits. The pure covariances $\frac12\operatorname{diag}(e^{2r},e^{-2r})$ and $\frac12\operatorname{diag}(e^{-2r},e^{2r})$, displaced by
$(m\sqrt\pi,n\sqrt\pi)$ with $m,n\in\mathbb Z$, have common squeezing $r$ and SSD Bloch vectors $(\pm\alpha,0,\pm\beta)$ and $(\pm\beta,0,\pm\alpha)$, with independent signs. Set $v_1=(\alpha,0,\beta)$ and choose one row:
\begin{equation}
\begin{array}{c|ccc}
 j&v_2&v_3&k_j(t)\\ \hline
 1&(-\alpha,0,\beta)&(-\beta,0,-\alpha)&4t\\
 2&(\beta,0,-\alpha)&(-\alpha,0,\beta)&4t/\sqrt5\\
 3&(\alpha,0,-\beta)&(-\alpha,0,-\beta)&4(1-t)
\end{array}
\end{equation}
Projective measurement axes parallel to $v_1+v_2-v_3$ and $v_1-v_2$ achieve 
\begin{equation}
 W_j=2t\|v_1+v_2-v_3\|+2(1-t)\|v_1-v_2\| \ge C_j(t)\alpha+k_j(t)\beta.
\end{equation}
For row $1$ and $3$ the above inequality can be checked via straightforward algebra, for $2$ choose the following inequalities:
$\sqrt{(2\alpha+\beta)^2+\alpha^2}\ge\sqrt5\,\alpha+2\beta/\sqrt5$ and $\sqrt{\alpha^2+\beta^2}\ge\alpha$ to realize the above bound. Further, choose $j$ with $C_j(t)=S_3^{\mathrm{STAB}}(t)$ and corresponding $k_j(t)$. Any finite $r\ge\log[C_j(t)/k_j(t)]$ then satisfies
\begin{equation}
\begin{aligned}
W_j-C_j(t)
&\ge k_j(t)\beta-C_j(t)(1-\alpha)\\
&>[k_j(t)-C_j(t)e^{-r}]\beta\ge0.
\end{aligned}
\end{equation}
This violation certifies set-magic for any value of $t$ via the explicit squeezing dependence $r\ge\log[S_3^{\mathrm{STAB}}(t)/k(t)]$ where $k(t)=k_j(t)$ appropriately paired. A consistent choice is always $k(t)=4\min\{t/\sqrt{5},1-t\}$.
\subsection{Analytical approximation for the RoM of Gaussian states under SSD}\label{sec:ROM_gaussian_States}
In this section we offer an analytical expression for the RoM, via the first term approximation of the Bloch vector series. First, we compare the RoM value for the vacuum state given by this approximation, with the exact value obtained in Ref.~\cite{Calcluth_2024} numerically, and proven analytically in Sec.~\ref{sec:exact_XZ}. Then, by maximizing the analytical approximation, we obtain a bound on the maximum RoM possible in this set, which is $1.302$. Together with the result in Sec.~\ref{sec:exact_XZ}, this constitutes an analytical derivation of the numerical values obtained in~\cite{Calcluth_2024}.

First, from Eqs.~\eqref{eq:Z_gaussian_approx}, \eqref{eq:X_gaussian_approx} and \eqref{eq:Y_gaussian_approx}, and following Eq.~\eqref{eq:single_qubit_ROM} one easily gets the approximated RoM of any Gaussian state under the SSD channel 
\begin{equation}
   R=  |X| +|Y|+|Z|\approx R^{(1)}(\hat\rho_L). \label{eq:ROM_approx}
\end{equation}
In particular, for the vacuum state, we have that $p_0=q_0=r=0$ and therefore, from Eqs~\eqref{eq:V_parameterized}, $V_{qq}=V_{pp}=1/2$ while $V_{qp}=0$.
resulting in 
\begin{equation}
   R(\text{vac})=\frac{8}{\pi}\exp\left(-\frac{\pi}{4}\right)\approx 1.161,
\end{equation}
which fairly agrees with the numerical value of $1.160$ obtained numerically in~\cite{Calcluth_2024} and analytically in Sec.~\ref{sec:exact_XZ}. Now we maximize the RoM~\eqref{eq:ROM_approx} over all the displacements, squeezing and rotations. First, note that to maximize it, we must take $p_0$ and $q_0$ to be multiples of $\sqrt{\pi}$. By symmetry, it is enough to consider $p_0=q_0=0$. This reduces $R$ to the symmetric  expression
\begin{align}
    R =& \frac{8}{\pi} \exp\left\{-\frac{\pi}{4}\cosh(2r)\right\} \cosh\left(\frac{\pi}{4}\cos(2\theta)\sinh(2r)\right)\nonumber\\
    +& \frac{16}{\pi^2} \exp\left\{-\frac{\pi}{2}\cosh(2r)\right\} \sinh\left(\frac{\pi}{2}\sin(2\theta)\sinh(2r)\right)\label{eq:Rpi_q0p0_zero}.
\end{align}
Note that the first term comes from $|X|+|Z|$ while the second term corresponds to $|Y|$. Now, we can take the partial derivative respect to $\theta$ to find that $\theta =\pi/4$ is a critical point. Taking it twice  shows that it is actually a maximum. By substituting this value of theta in the above equation and deriving respect to $r$ and equating to zero, we obtain the transcendental equation
\begin{align}
    \cosh(2r)\cosh\left(\frac{\pi}{2}\sinh(2r)\right) - \sinh(2r)\sinh\left(\frac{\pi}{2}\sinh(2r)\right) =\nonumber\\ \frac{\pi}{4}\sinh(2r)\exp\left\{\frac{\pi}{4}\cosh(2r)\right\}.
\end{align}
This equation can be solved using Mathematica or python and gives the value of $r\approx 0.262$. Substituting the values $(\theta=\pi/4,r=0.262)$ in $R$ gives the approximated maximum value of $R_{\max}\approx1.302$. Comparing again with the numerical value of $1.303$  found in Ref~\cite{Calcluth_2024}, reached by the same values of squeezing and rotation that those found in this work, we show that our Bloch expression predicts a maximum ROM within a relative error of  $O(10^{-4})$.

\subsection{Finite squeezing}\label{sec:finiteq_Squeezing}

In the main text we describe that for the value of $t\approx0.643$, the magic state that maximizes $S_3(t)$ corresponds to the vacuum state displaced by $\sqrt{\pi}$, $\hat Z_L\ket{0}$, the other two states correspond to stabilizer vertices, arising from  high squeezing ($r=1.5$) Gaussian states.  The Bloch vectors obtained numerically are
\begin{equation}
\begin{aligned}
\vec{v}_1 &= \left(0,0,1\right),\\
\vec{v}_2 &= \left(-0.5802,0,0.5802\right),\\
\vec{v}_3 &= \left(0,0,-1\right),
\end{aligned}
\end{equation}
where $\vec{v}_2$ corresponds to the image of $\hat Z_L \ket{0}$. Note that the state is a noisy version of the $H$-type state $\frac{1}{2}(-1,0,1)$~\cite{Heinrich2019}. The value for $S_3(t=0.643)$ for the case $r\leq r_0$, is  $S_3(t=0.643) \approx 3.91$ while for Gaussian states with $r=r_0$ the value is $S_3(t=0.643) \approx 3.88 =S_3^{\mathrm{STAB}}(t=0.643)$.

\subsection{Randomness certification from the $S_3(t)$ witness}
\label{sec:SM_randomness}

We next use the prepare-and-measure (PAM) scenario to certify randomness from the behavior obtained by mapping Gaussian states through the stabilizer subsystem decomposition (SSD) map. We consider three preparations, labelled by $x=0,1,2$, and two binary measurements, labelled by $y=0,1$. For a given Gaussian realization, the SSD map produces three logical qubit states with Bloch vectors $\vec r_x$. We choose the measurement directions as
\begin{equation}
    \vec q_0 = \frac{\vec r_0+\vec r_1-\vec r_2}{\|\vec r_0+\vec r_1-\vec r_2\|_2}, \qquad \vec q_1 = \frac{\vec r_0-\vec r_1}{\|\vec r_0-\vec r_1\|_2}. 
    \label{eq:SM_measurement_directions}
\end{equation}
The resulting correlators are therefore
\begin{equation}
    E_{xy} = \vec r_x\cdot\vec q_y, \qquad x=0,1,2,\quad y=0,1. 
    \label{eq:SM_correlators}
\end{equation}
With this choice of measurements, the tilted witness can be written directly in terms of the observed correlators as
\begin{equation}
    S_3(t) = 2t\left(E_{00}+E_{10}-E_{20}\right) + 2(1-t)\left(E_{01}-E_{11}\right). 
    \label{eq:SM_S3_correlators}
\end{equation}

For each value of the tilting parameter $t$ and squeezing bound $r_0$, we first maximize $S_3(t)$ over triples of single-mode Gaussian states satisfying
\begin{equation}
    0\leq r_x\leq r_0, \qquad x=0,1,2,
\end{equation}
where $r_x$ denotes the squeezing parameter of the corresponding Gaussian state. The optimization is performed over the displacements, squeezing parameters, and squeezing angles of the three preparations. The Bloch vectors entering Eq.~\eqref{eq:SM_correlators} are evaluated using the exact  expressions~\eqref{Eq:XYZ-component}.

The randomness is then quantified from the complete PAM behavior generated by each optimal Gaussian realization. In particular, for each optimal realization we fix all six correlators $\{E_{xy}\}_{x=0,1,2;y=0,1}$ to their Gaussian values and determine the largest guessing probability compatible with the PAM model by means of the semidefinite program introduced in~\cite{sarubi2026}. Thus, the SDP does not merely fix the value of the witness $S_3(t)$; it constrains the complete set of observed correlators.

We introduce two branches corresponding to Eve's two possible guesses, $e=0,1$. To bound the probabilities, we employ an adapted Navascués-Vértesi hierarchy for prepare-and-measure scenarios, first developed for the prepare-and-broadcast scenario~\cite{sarubi2026}. Specifically, the optimization is performed at the $L(1, 2)$ level of the moment relaxation, where preparation amplitudes are retained up to scalar degree $q=1$ and measurement operators are represented by words of maximum length $t=2$. Each branch is represented by a moment functional $L_e \in \mathcal{S}_{1,2}(w_e)$, whose associated positive semidefinite moment matrix is subject to the normalization ($w_0+w_1=1$), preparation constraints for a bounded dimension $d=2$, and the binary measurement algebra ($B_y^2 = \mathbb{I}$). Denoting the corresponding unnormalized joint probabilities by $P(b,e|x,y)$ and the branch weights by $w_e$, the guessing probability for the generation setting $(x_{\rm g},y_{\rm g})$ is obtained from
\begin{equation}
    P_{\rm guess} = \max \left[ P(b=0,e=0|x_{\rm g},y_{\rm g}) + P(b=1,e=1|x_{\rm g},y_{\rm g}) \right], 
    \label{eq:SM_pguess}
\end{equation}
subject to
\begin{equation}
    \sum_{e=0}^{1} E_{xy}^{(e)} = E_{xy}^{G}, \qquad x=0,1,2,\quad y=0,1, 
    \label{eq:SM_fixed_distribution}
\end{equation}
where $E_{xy}^{G}$ denotes the correlator obtained from the Gaussian realization. In our implementation, the six equalities in Eq.~\eqref{eq:SM_fixed_distribution} are imposed to numerical precision. The moment matrices are required to be positive semidefinite, while the preparation constraints and normalization conditions enforce consistency with the considered PAM model.

The conditional min-entropy of the output bit is then
\begin{equation}
    H_{\min}(B|E) = -\log_2 P_{\rm guess}. 
    \label{eq:SM_min_entropy}
\end{equation}
Throughout this work, we choose the generation setting
\begin{equation}
    (x_{\rm g},y_{\rm g})=(2,1). 
    \label{eq:SM_generation_setting}
\end{equation}

An important feature of this procedure is that the value of $S_3(t)$ does not, in general, uniquely determine the full observed behavior. Consequently, different Gaussian realizations attaining the same optimal witness value can give different sets of correlators and hence different values of the certified min-entropy. To account for this degeneracy, for each $(t,r_0)$ we evaluate the SDP for the Gaussian configurations found at the maximum of $S_3(t)$ and retain the smallest certified min-entropy, corresponding to the most conservative randomness estimate:
\begin{equation}
    H_{\min}^{\rm worst}(t;r_0) = \min_{\lambda\in\mathcal{M}_{t,r_0}} H_{\min} \left( \{E_{xy}^{G}(\lambda)\}_{x,y} \right), 
    \label{eq:SM_worst_case_entropy}
\end{equation}
where $\mathcal{M}_{t,r_0}$ denotes the set of Gaussian configurations attaining the maximum of $S_3(t)$ within the considered squeezing range.

If the optimized witness does not exceed the classical bound
\begin{equation}
    S_3^C(t)=\max\{6t,4-2t\},
\end{equation}
we assign zero certified randomness. Otherwise, the SDP provides a lower bound on the conditional min-entropy of the generated output bit. This procedure therefore combines the continuous-variable optimization over physically accessible Gaussian states with the discrete-variable PAM randomness certification based on the corresponding mapped behavior.

In Table~\ref{tab:optimal_states_randomness} we give the optimal states and measurements yielding the saturated randomness bounds at $r_0 = 1.5$ for $t = 0.5$.
\begin{table}[h!]
\centering
\begin{tabular}{c c c c c c}
\hline
\hline
Preparation & $q_0$ & $p_0$ & $r$ & $\theta$ & Bloch vector $\vec{r}_x = (X,Y,Z)$ \\
\hline
$x = 0$ & $1.7725$ & $-0.3607$ & $1.4287$ & $3.1416$ & $(0, 0, -1)$ \\
$x = 1$ & $0.9117$ & $-0.0746$ & $1.5000$ & $1.5717$ & $(1, 0, 0)$ \\
$x = 2$ & $0$ & $1.7725$ & $0.5478$ & $1.5708$ & $(-0.9396, 0, 0.1215)$ \\
\hline
\hline
\end{tabular}

\vspace{0.3cm}
\begin{tabular}{c c}
\hline
\hline
Measurement & Direction $\vec{q}_y$ \\
\hline
$y = 0$ & $(0.8657, 0.0, -0.5006)$ \\
$y = 1$ & $(-0.7071, 0.0, -0.7071)$ \\
\hline
\hline
\end{tabular}
\caption{Parameters of the Gaussian states, corresponding logical Bloch vectors, and effective measurement directions optimizing $S_3(t)$ and minimizing the conditional min-entropy for the generation setting $(x_{\rm g}, y_{\rm g}) = (2, 1)$ at $r_0 = 1.5$.}
\label{tab:optimal_states_randomness}
\end{table}
For this specific configuration, the tilted witness achieves a value of $S_3(0.5) \approx 3.6547$. It is instructive to compare the certified randomness against the bare marginal predictability of the state. At the generation setting, Bob's most likely outcome occurs with probability $P(b=0|x_{\rm g}=2, y_{\rm g}=1) = (1 + E_{21})/2 \approx 0.7892$, which would naively suggest an apparent entropy of roughly $0.341$ bits. However, accounting for Eve's optimal strategy and side-information via the SDP hierarchy yields a strictly higher guessing probability of $P_{\rm guess} \approx 0.8828$. This reduces the  certified conditional min-entropy to $H_{\min}(B|E) \approx 0.1798$ bits, highlighting the importance of the full adversarial optimization. 

Finally, as a method note, the worst-case entropy in Eq.~\eqref{eq:SM_worst_case_entropy} is evaluated numerically by retaining, at each $(t,r_0)$, only the top-$k$ near-degenerate local optima of $S_3(t)$ found via multi-start L-BFGS-B, rather than an exhaustive characterization of the full degenerate manifold $\mathcal{M}_{t,r_0}$. The code can be found in~\cite{zamora2026code}.
\end{document}